\title{Antichains for Concurrent Parameterized Games\thanks{This work
is partially supported by ANR BisoUS (ANR-22-CE48-0012).}}
\def\orcidID#1{\smash{\href{http://orcid.org/#1}{\protect\raisebox{-1.25pt}{\protect\includegraphics{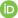}}}}}
\author{
  Nathalie Bertrand\orcidID{0000-0002-9957-5394}
  \institute{Université de Rennes,
  Inria,\\ CNRS, IRISA, Rennes, France}
  \email{nathalie.bertrand@inria.fr}
  \and
  Patricia Bouyer\orcidID{0000-0002-2823-0911}
  \institute{LMF, CNRS \& ENS Paris-Saclay,\\Université Paris-Saclay,\\
  Gif-sur-Yvette, France}
  \email{bouyer@lmf.cnrs.fr}
  \and
  Gaëtan Staquet\orcidID{0000-0001-5795-3265}
  \institute{Nantes Université,\\École Centrale Nantes,\\CNRS, LS2N,
  Nantes, France}
  \email{gaetan.staquet@ec-nantes.fr}
}
\theoremstyle{plain}
\newtheorem{theorem}{Theorem}[section]
\newtheorem{lemma}[theorem]{Lemma}
\newtheorem{proposition}[theorem]{Proposition}
\newtheorem{corollary}[theorem]{Corollary}
\theoremstyle{definition}
\newtheorem{definition}[theorem]{Definition}
\newtheorem{example}[theorem]{Example}
\NewDocumentCommand{\proofsubparagraph}{ m }{\textbf{#1}}
\tikzset {
  edge with arrow/.style = {
    ->,
    >=stealth,
    shorten >=1pt,
  },
  directed/.style = {
    edge with arrow,
    node distance=2cm,
    on grid,
    semithick,
    double distance=1.5pt,
  },
  game/.style = {
    directed,
    auto,
    initial text={},
    state/.append style = {
      ellipse,
      inner sep = 1pt,
      minimum size = 10pt,
    },
    eve/.style = {
      state,
      ellipse,
    },
    adam/.style = {
      state,
      rectangle,
      inner sep = 2pt,
    },
    adamStr/.style = {
      ultra thick,
    }
  },
  lattice/.style = {
    on grid,
    semithick,
    auto,
  }
}
\pgfplotsset{
  compat=1.17,
  every x tick label/.append style = {font=\scriptsize},
  every y tick label/.append style = {font=\scriptsize},
  every axis label/.append style = {font=\scriptsize},
}
\NewDocumentCommand{\hairsp}{}{\hspace{1pt}}
\NewDocumentCommand{\ie}{}{\textsl{i.\nobreak\hairsp{}e.}\xspace}
\NewDocumentCommand{\eg}{}{\textsl{e.\nobreak\hairsp{}g.}\xspace}
\RenewDocumentCommand{\implies}{}{\Rightarrow}
\RenewDocumentCommand{\impliedby}{}{\Leftarrow}
\NewDocumentCommand{\nat}{}{\mathbb{N}}
\NewDocumentCommand{\natnonneg}{}{{\nat_{>0}}}
\NewDocumentCommand{\natsubs}{ m }{{\nat_{#1}}}
\NewDocumentCommand{\interval}{ m m }{\llbracket #1, #2 \rrbracket}
\NewDocumentCommand{\subsets}{ m }{\mathcal{P}(#1)}
\NewDocumentCommand{\complexity}{ m }{\textsf{#1}}
\NewDocumentCommand{\PTIME}{}{\complexity{PTIME}\xspace}
\NewDocumentCommand{\PSPACE}{}{\complexity{PSPACE}\xspace}
\NewDocumentCommand{\arena}{}{\mathcal{A}}
\NewDocumentCommand{\vertices}{}{V}
\NewDocumentCommand{\alphabet}{}{\Sigma}
\NewDocumentCommand{\transitions}{}{\Delta}
\NewDocumentCommand{\En}{}{\mathrm{En}}
\NewDocumentCommand{\opponents}{m O{}}{\nabla_{#2}(#1)}
\NewDocumentCommand{\histories}{}{\mathsf{Hist}}
\NewDocumentCommand{\strat}{}{\sigma}
\NewDocumentCommand{\outcome}{}{\mathsf{Out}}
\NewDocumentCommand{\target}{}{t}
\NewDocumentCommand{\game}{}{\mathcal{G}}
\NewDocumentCommand{\kGame}{}{\mathsf{Know}_{\game}}
\NewDocumentCommand{\verticesE}{}{{\vertices_E}}
\NewDocumentCommand{\verticesEK}{}{{\vertices_E^\latticeK}}
\NewDocumentCommand{\verticesA}{}{{\vertices_A}}
\NewDocumentCommand{\kTransitions}{}{{\transitions_{\mathsf{Know}}}}
\NewDocumentCommand{\final}{}{F}
\NewDocumentCommand{\kStrat}{}{\lambda}
\NewDocumentCommand{\tree}{}{\mathcal{T}}
\NewDocumentCommand{\Win}{}{\textsf{Win}\xspace}
\NewDocumentCommand{\Lose}{}{\textsf{Lose}\xspace}
\NewDocumentCommand{\parOrder}{}{\sqsubseteq}
\NewDocumentCommand{\parOrdersim}{}{\sqsubseteq_{\mathrm{sim}}}
\NewDocumentCommand{\Dom}{}{\mathsf{Dom}}
\NewDocumentCommand{\reduce}{ m }{\left\lceil #1 \right\rceil}
\NewDocumentCommand{\expand}{ m }{{#1\!\!\downarrow}}
\NewDocumentCommand{\lub}{}{\sqcup}
\NewDocumentCommand{\glb}{}{\sqcap}
\NewDocumentCommand{\W}{}{\mathsf{W}}
\NewDocumentCommand{\WAlt}{}{\mathsf{W}_{\mathrm{Alt}}}
\NewDocumentCommand{\KPred}{m m O{}}{\mathsf{KPred}_{#3}[#1](#2)}
\NewDocumentCommand{\KPredAlt}{m m
O{}}{\mathsf{KPred}^{\mathrm{Alt}}_{#3}[#1](#2)}
\NewDocumentCommand{\Pred}{m O{}}{\mathsf{Pred}_{#2}(#1)}
\NewDocumentCommand{\PredAlt}{m O{}}{\mathsf{Pred}^{\mathrm{Alt}}_{#2}(#1)}
\NewDocumentCommand{\latticeK}{}{\mathcal{K}}
\begin{document}
\maketitle

\begin{abstract}
  Concurrent parameterized games involve a fixed yet arbitrary
  number of players.
  They are described by finite arenas in which
  the edges are labeled with languages that describe the possible
  move combinations leading from one vertex to another ($n$ players
  yield a word of length $n$).
  Previous work showed that, when edge labels are regular languages,
  one can decide whether a distinguished
  player, called Eve, has a strategy to ensure a reachability
  objective, against any strategy profile of her arbitrarily many
  opponents. This decision problem is known to be
  \PSPACE-complete. A basic ingredient in the
  \PSPACE-membership proof is
  the reduction to the exponential-size \emph{knowledge game}, a
  2-player game that reflects the knowledge Eve has on the number of
  opponents.

  In this paper, we provide a symbolic approach, based on
  antichains, to compute Eve's winning region in the knowledge
  game. In words, it gives the minimal knowledge Eve needs at every
  vertex to win the concurrent parameterized reachability game. More
  precisely, we propose two fixed-point algorithms that compute, as
  an antichain, the maximal elements of the winning region for Eve
  in the knowledge game. We implemented these two algorithms in C++,
  as well as the one initially proposed, and report on their
  relative performances on various benchmarks.
\end{abstract}

\section{Introduction}

\paragraph{Parameterized verification.} The verification of models
with unknown parameters is known as parameterized
verification. Special cases are systems formed of an arbitrary number
of agents. Motivations for such models are varied: they range from
distributed algorithms and communication protocols to chemical or
biological systems. In all cases, these parameterized systems are
meant to be correct independently of the number of agents.  Verifying
parameterized systems is challenging, since it aims at validating
infinitely many instances at once, independently of the parameterized
number of agents.

Apt and Kozen proved the first undecidability result for the
verification of parameterized systems~\cite{AK-ipl86}. In their
reduction, the system consists of a single agent, whose description
depends on parameter $n$. Oftentimes, when agents bear identities, the
parameterized verification of simple properties is
undecidable. In contrast, in their pioneering work on distributed
models with many
identical entities, German and Sistla represented the behavior of a
network by finite state machines interacting via rendezvous
communications, and provided decision algorithms for their
parameterized verification~\cite{GS-jacm92}. Since then, variants of
models have been proposed to handle various communication
means. Remarkably, subtle changes, such as the presence or absence of
a controller in the system, can drastically alter the complexity of
the verification problems, as surveyed in~\cite{Esparza-stacs14}.

\paragraph{Concurrent parameterized games.}
Concurrent parameterized games~\cite{BertrandBM19} are one of the few
examples of game frameworks in the parameterized verification
landscape. They generalize concurrent 2-player games to an arbitrary number of
players. More precisely, the arena of a concurrent parameterized game
represents infinitely many games, each played with a fixed number of
players. A distinguished player, Eve, aims at achieving her objective
not knowing \emph{a priori} how many opponents she has. Solving such
games amounts to determining whether Eve has a winning strategy whatever
the number of her opponents and their respective strategies. When Eve
wants to achieve a reachability objective, deciding whether she has a
winning strategy, independently of the number of her opponents, is a
\PSPACE-complete problem~\cite{BertrandBM19}. Since Eve's opponents
can collude, the resolution of concurrent parameterized games with
reachability objectives reduces to the resolution of a two-player
game, called the \emph{knowledge game}, where vertices not only store
the original game vertex but also the knowledge Eve has on the number
of her opponents. Her adversary in that 2-player game represents the
coalition of her arbitrarily many opponents in the parameterized
game. In the worst-case, the set of possible knowledges --- and hence the
size of the knowledge game --- can be exponential in the size of the
concurrent parameterized game. The DFS algorithm proposed
in~\cite{BertrandBM19} performs an on-the-fly DFS exploration
of the knowledge game.

\paragraph{Contributions.}
We exploit a monotonicity property of the knowledge
game: if Eve wins from a given knowledge set, she also wins from any
finer knowledge set. Building on this observation and the natural
partial order on sets of natural numbers (representing possible number of
opponents), we propose two new algorithms for checking the existence
of a winning strategy for Eve, based on antichains.

We first characterize the winning positions for Eve in the knowledge
game as the least fixed-point of a monotone function on the lattice of
antichains on knowledges. We turn this non-effective fixed-point
characterization into two algorithms: the first one exploits that only
finitely many knowledge sets are relevant, making it possible to
compute in a finite lattice; the second one takes an operational
viewpoint to compute iteratively the winning region. A non-trivial
contribution consists in proving that these two algorithms indeed
compute the winning region.  Importantly, our new
algorithms do more than the aforementioned DFS algorithm, as the
latter decides whether Eve has a winning strategy from a fixed
initial vertex, while we determine every position from which Eve can
win.

We implemented our two new algorithms and the DFS one proposed
in~\cite{BertrandBM19} in C++ and benchmarked their performances on
several instances of concurrent parameterized games. All instances are
scalable, to demonstrate how the three algorithms scale when instances
grow large (both in number of vertices and relevant knowledge sets).
Inspired by the \PSPACE-hardness proof of~\cite{BertrandBM19}, part of
our benchmarks are concurrent parameterized arenas derived from
quantified boolean formulas (QBF), with up to 7 variables, and a
number of clauses between 50 and 100.
Our benchmarks cover instances of concurrent parameterized games with
up to 126 vertices and hundreds of edges, resulting in a lattice of
2048 to 32768 knowledge sets (per vertex).  We were able to
successfully check the existence or the absence of a winning strategy
for Eve on instances with up to thousands of knowledge vertices in
less than 5 minutes. The experiments show that on instances
of synthetic arenas that are winning for Eve, the running times are
favorable to the
DFS-algorithm. This competitive drawback of our algorithms must be set
against the fact that our algorithms not only decide whether Eve has a
winning strategy, but also compute her winning region: they provide
for every vertex
the minimal information she needs to win.
Conversely, on QBF instances, the algorithm from~\cite{BertrandBM19}
times out for much smaller instances than our antichain
algorithms. Also, there are many negative synthetic instances of
parameterized games, on which the antichain algorithms outperform the
DFS one. Finally, among antichain algorithms, the experiments
highlight the significantly better performance of the one that avoids
the construction of the lattice.

\paragraph{Related work.} Fixed-point computations on lattices were
successfully used in the
verification community to efficiently solve computationally hard
problems that are not \PTIME-tractable: resolution of games of
imperfect information~\cite{DDR-hscc06}, universality of finite
automata~\cite{DDHR-cav06,DR-tacas10}, language inclusion for various
types of automata~\cite{DR-lmcs09,HIRV-fmsd20,DGH-tacas23},
an automata-based approach for LTL
model-checking~\cite{DDMR-tacas08,DR-lmcs09}, LTL
realizability~\cite{FJR-cav09,FJR-fmsd11,CP-tacas23}, QBF
satisfiability~\cite{BBDDR-atva11}, or refinement
checking~\cite{WSSLDWL-icfem12,LGW-lmcs21}.  An antichain is merely a
set of incomparable
elements for a given partial order.  The antichain approach always
exploits some monotonicity of the problem and proposes a fixed-point
(semi-)algorithm iterating a monotone operator over the lattice of
antichains. In these works, the greatest or least fixed-point of the
operator exactly contains the relevant information to solve the
problem. Generally speaking the antichain approach compared favorably,
in terms of experimental complexity, to other approaches from the
literature. The work closest to ours in that series of contributions
is the earliest one, on games of imperfect information. Not only is it
the only one concerned with games, but also, similarly to our setting,
players only have partial information. The two settings however differ
greatly in that in concurrent parameterized games, the information Eve
has is very specific and concerns the number of opponents she has,
whereas in games of imperfect information, the uncertainty is on the
current game vertex.

\section{Concurrent Parameterized Games}

Concurrent parameterized games, introduced in~\cite{BertrandBM19}, are
described by parameterized arenas. In such an arena, edges are labeled
by pairs $(a,k)$, where $a$ is an action of Eve and $k$ a number of
opponents\footnote{Our definition of parameterized arenas specifies
  the number of opponents, and does not reflect their action choices.
  While parameterized arenas were initially defined with edges labeled
  with regular languages, for the problem we are interested in, one
  can focus on the number of opponents without loss of generality
(see~\cite[Section 4]{BertrandBM19}).}. The number of opponents is
initially unknown to Eve, but she may refine her knowledge about $k$
while playing, based on the alternating sequence of actions and
visited vertices. The objective of Eve is to ensure a reachability
objective independently of $k$.

We write $\natnonneg$ for the set of positive integers
and more generally $\natsubs{\gamma}$ for the subset of positive
integers satisfying a constraint $\gamma$, \eg,
$\natsubs{>2}$ for the set of integers larger than $2$.

\begin{definition}[Parameterized arena]
  A \emph{parameterized arena} is a tuple
  \(\arena = (\vertices, \alphabet, \transitions)\)
  where
  \(\vertices\) is a finite set of vertices,
  \(\alphabet\) is a finite set of actions, and
  \(\transitions : \vertices \times \alphabet \times \natnonneg \to
  \subsets{\vertices}\) is the transition function.
\end{definition}

The arena is \emph{deterministic} if for every \(v \in \vertices\),
\(a \in \alphabet\), and \(k \in \natnonneg\), there is at most one
vertex \(v' \in \vertices\) such that
\(v' \in \transitions(v, a, k)\).  An action \(a \in \alphabet\) is
\emph{enabled} at vertex \(v\) if there exists \(k \in \natnonneg\)
such that \(\transitions(v, a, k) \neq \emptyset\).  We write
\(\En(v)\) for the set of enabled actions at \(v\).  The arena is
assumed to be \emph{complete for enabled actions}: for every
\(v \in \vertices\), if \(a\) is enabled at \(v\), then for all
\(k \in \natnonneg\), \(\transitions(v, a, k) \neq \emptyset\). This
assumption is natural: Eve does not know how many opponents she has,
and the successor vertex must exist whatever that number is.

For any \(v, v' \in \vertices\) and \(a \in \alphabet\), we introduce
the following notation to represent the set of number of opponents that can
lead from \(v\) to \(v'\) under action \(a\) of Eve:
\(\opponents{v, a, v'} = \{k \in \natnonneg \mid v' \in
\transitions(v, a, k)\}\).

For every $v\in \vertices$, for every $a \in \En(v)$ and every
$V' \subseteq V$, we define:
\begin{equation}\label{nablaI}
  \opponents{v,a,V'} = \bigcap_{v' \in V'} \opponents{v,a,v'} \setminus
  \bigcup_{v' \notin V'} \opponents{v,a,v'}.
\end{equation}
This corresponds exactly to the number of opponents that allows to
reach vertices in $V'$ and forbids to reach vertices outside $V'$.
Note that, on non-deterministic arenas, we may have
\(\opponents{v, a, \{v'\}} \subsetneq \opponents{v, a, v'}\).
As a consequence of the completeness assumption, for every
$v \in \vertices$ and $a \in \En(v)$, for every set
$K \subseteq \natnonneg$:
\begin{equation}\label{K:complete}
  K = \bigcup_{V' \subseteq V} K \cap \opponents{v,a,V'}.
\end{equation}

From now on, we assume that all sets
\(\opponents{v, a, v'}\) and \(\opponents{v, a, V'}\) are finite
unions of intervals, as in our implementation
(see~\Cref{sec:experiments}).  However, the coming developments also
apply to any family of sets with effective intersection, union and
set difference operations, such as semilinear sets for instance.

\begin{figure}[t]
  \centering
  \begin{tikzpicture}[
      game,
      node distance = 20pt and 70pt,
    ]
    \node [state]                               (v)   {\(v\)};
    \node [state, above right=of v]             (x1)  {\(x_1\)};
    \node [state, below right=of v]             (x2)  {\(x_2\)};
    \node [state, right=of x1]                  (y1)  {\(y_1\)};
    \node [state, right=of x2]                  (y2)  {\(y_2\)};
    \node [state, accepting, below right=of y1] (t)   {\(t\)};
    \node [state, left=of v]              (bad) {\(s\)};

    \path
    (v) edge node [sloped, pos=0.4, yshift=-3pt]  {\(a, \natnonneg\)}
    (x1)
    edge node [sloped, pos=0.6, yshift=-3pt]  {\(a, \natnonneg\)}
    (x2)
    edge node [sloped]                        {\(c,  \natsubs{\leq
    2}\)}     (bad)
    edge node                                 {\(c, \natsubs{> 2}\)}         (t)
    (x1)  edge node                                 {\(b, \{1\}\)}
    (y1)
    (x2)  edge node [']                             {\(b, \{2\}\)}
    (y2)
    (y1)  edge node                                 {\(b,
    \natnonneg\)}            (t)
    (y2)  edge node [']                             {\(b,
    \natnonneg\)}            (t)
    ;

    \draw [rounded corners = 10pt]
    let
    \p{s} = (x1.west),
    \p{t} = (v.north),
    in
    (\p{s}) -- (\x{t}, \y{s})
    node [above, pos=0.4] {\(b,  \natsubs{\neq 1}\)}
    -- (\p{t})
    ;
    \draw [rounded corners = 10pt]
    let
    \p{s} = (x2.west),
    \p{t} = (v.south),
    in
    (\p{s}) -- (\x{t}, \y{s})
    node [below, pos=0.4] {\(b,  \natsubs{\neq 2}\)}
    -- (\p{t})
    ;
  \end{tikzpicture}
  \caption{An example of a non-deterministic parameterized
  arena.}\label{fig:arena:nondeterministic}
\end{figure}
\begin{example}
  Let \(\arena\) be the parameterized arena of
  \Cref{fig:arena:nondeterministic}.  In our figures, we use
  constraints to represent the transition function: for instance, the
  label \enquote{\(c, \natsubs{\leq 2}\)} on the transition from \(v\)
  to \(s\) represents
  \(\transitions(v, c, 1) = \transitions(v, c, 2) = \{s\}\).

  The label \enquote{\(a, \natnonneg\)} from \(v\) to \(x_1\) denotes
  that \(\opponents{v, a, x_1} = \natnonneg\), while the absence of
  edges from \(v\) to \(y_1\) represents
  \(\opponents{v, a, y_1} = \emptyset\).  Further, as
  \(\opponents{v, a, x_2} = \natnonneg\) and, for any
  \(v' \in \{s, y_1, y_2, t\}\), \(\opponents{v, a, v'} = \emptyset\),
  we have \(\opponents{v, a, \{x_1, x_2\}} = \natnonneg\).  Finally,
  since \(\opponents{x_1, b, v} =  \natsubs{\neq 1}\) and
  \(\opponents{x_1, b, y_1} = \{1\}\), it holds that
  \(\opponents{x_1, b, \{v, y_1\}} = \emptyset\).
  Note that this arena
  is complete for enabled actions.
\end{example}

Let \(k \in \natnonneg\).
A \emph{\(k\)-history}, for a coalition composed of \(k\) opponents of Eve, is
a finite sequence
\(v_0 a_0 \dotsb v_i \in {(\vertices \cdot \alphabet)}^* \cdot \vertices\)
such that for every \(j < i, v_{j+1} \in \transitions(v_j, a, k)\).
A \emph{history} in \(\arena\) is a \(k\)-history for some \(k \in \natnonneg\).
We write \(\histories(k)\) (resp.\ \(\histories\)) for the set of
\(k\)-histories
(resp.\ histories) in \(\arena\).
Similar notions of a \emph{\(k\)-play} and a \emph{play} are defined for
infinite sequences.

\begin{definition}[Strategy]
  A \emph{strategy} for Eve in \(\arena\) is a mapping
  \(\strat : \histories \to \alphabet\) that associates to every
  history \(hv' \in \histories\) an action \(\strat(hv') \in \En(v')\).
\end{definition}

A strategy for Eve is applied with no prior information on the number
of opponents. The resulting plays however may highly
depend on it. Given a strategy \(\strat\), an initial vertex \(v\)
and \(k \in \natnonneg\) a number of opponents, we define the outcome
\(\outcome(\strat, v, k)\) as the set of plays that \(\strat\) induces
from \(v\) when Eve has exactly \(k\) opponents:
\(\outcome(\strat, v, k)\) is the set of all \(k\)-plays
\(\rho = v_0 a_0 v_1 a_1 v_2 \dotsb\) such that \(v = v_0\), and for
all \(i \geq 0\), \(\strat(v_0 a_0 \dotsb v_i) = a_i\) and
\(v_{i+1} \in \transitions(v_i, a_i, k)\).  The completeness
assumption ensures that the set \(\outcome(\strat, v, k)\) is not
empty.  Finally, \(\outcome(\strat, v)\) is the set of all possible
plays induced by \(\strat\) from \(v\):
\(\outcome(\strat, v) = \bigcup_{k \geq 1} \outcome(\strat, v, k)\).

Given a parameterized arena
\(\arena = (\vertices, \alphabet, \transitions)\), a target vertex
\(\target \in \vertices\) defines a \emph{reachability game}
\(\game = (\arena, \target)\) for Eve.  A strategy \(\strat\) for Eve
from \(v\) in $\game$ is
\emph{winning} if all plays in \(\outcome(\strat, v)\) eventually
reach \(\target\).  In this case,
\(v\) belongs to the \emph{winning region} of Eve.
On the arena of \Cref{fig:arena:nondeterministic}, in which the target
vertex is highlighted by a double circle, the winning region of Eve is
$\{t, y_1,y_2\}$, as argued in \Cref{ex:knowledgeGame} below.

For constraints $\nabla(v,a,v')$ that are finite unions of intervals or
semilinear sets, the paper~\cite{BertrandBM19} studies the existence
of a winning strategy for Eve, and shows:
\begin{theorem}[\cite{BertrandBM19}]
  Deciding the existence of a winning strategy for Eve is \PSPACE-complete.
\end{theorem}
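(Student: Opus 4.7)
I would establish the result in two independent parts: a \PSPACE algorithm for the existence of a uniform winning strategy, and a matching \PSPACE-hardness reduction.

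For the upper bound, the plan is to formalize the \emph{knowledge game} sketched in the introduction: a two-player turn-based reachability game whose vertices are pairs $(v, K)$ with $v \in \vertices$ and $K \subseteq \natnonneg$ a set of still-possible numbers of opponents compatible with the history so far. From an Eve-vertex $(v, K)$, Eve picks an action $a \in \En(v)$; then Adam picks a subset $V' \subseteq \vertices$, which both determines the new vertex (chosen in $V'$) and refines the knowledge to $K \cap \opponents{v, a, V'}$. Completeness of the arena together with equation~\eqref{K:complete} guarantees that some such $V'$ has $K \cap \opponents{v, a, V'} \neq \emptyset$, so the game is well-defined. I would then argue, by a bisimulation-style correspondence between Eve's uniform strategies and her strategies in the knowledge game, that $v$ is winning in the parameterized game iff $(v, \natnonneg)$ is winning in the knowledge game.

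To obtain a \PSPACE bound rather than the naive \complexity{EXPTIME} one, I would solve this two-player reachability game \emph{on-the-fly} via an alternating search from the root $(v_0, \natnonneg)$: Eve's moves are existentially guessed, Adam's moves are universally guessed, and a play is cut off as soon as the target is reached or a previously seen knowledge vertex repeats. Since all the $\opponents{v, a, v'}$ are finite unions of intervals taken from a fixed finite family, every reachable knowledge $K$ is a Boolean combination of polynomially many intervals, hence admits a polynomial-size description; the number of distinct knowledge vertices is therefore at most exponential, and a single vertex fits on a polynomial-size stack cell. This places the decision problem in alternating polynomial time over an exponential-size structure with polynomial-size configurations, which is \PSPACE by $\mathrm{AP} = \mathrm{PSPACE}$.

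For the lower bound, I would reduce from \textsc{True QBF}. Given $\exists x_1 \forall x_2 \cdots Q x_n\, \varphi$, I would build a parameterized arena where a gadget for each quantifier alternation encodes the variable assignment: Eve's action choice encodes the value of an existentially quantified variable, while for a universal variable the coalition of opponents, through its hidden cardinality $k$, forces Eve into two disjoint branches via constraints on $k$ (for instance, congruence classes modulo distinct primes, one per variable). After the $n$ quantifiers, the arena traverses the clauses of $\varphi$ so that the target vertex is reachable against every $k$ iff the encoded assignment satisfies the formula. Determining whether Eve has a uniform winning strategy then amounts exactly to the QBF's truth.

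The main obstacle, and where care is most needed, is the upper bound: justifying that on-the-fly exploration of the exponentially many knowledge vertices truly runs in polynomial space. The subtle point is that a knowledge is \emph{a priori} an arbitrary subset of $\natnonneg$, so one must show, by induction along plays, that only knowledges expressible as Boolean combinations of the finitely many constraints $\opponents{v, a, v'}$ appearing in the arena ever arise, and that their canonical representations remain of polynomial size throughout. Once this invariant is secured, the alternating algorithm and its \PSPACE analysis follow in a standard way.
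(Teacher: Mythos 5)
This theorem is imported from~\cite{BertrandBM19}; the paper does not reprove it, but it does describe the structure of the known \PSPACE algorithm (the knowledge game in \Cref{def:knowledge} and \Cref{thm:knowledge:strategy}, and the sub-game decomposition recalled in \Cref{sec:experiments}), against which your proposal can be checked. Your lower bound via QBF and your reduction to the knowledge game are in line with that; the genuine gap is in the complexity analysis of the upper bound. The knowledge game has exponentially many vertices, so a repetition-free play in it can be \emph{exponentially long}; an alternating machine that simulates plays move by move therefore runs in alternating polynomial \emph{space}, not alternating polynomial time, and $\mathrm{APSPACE} = \mathrm{EXPTIME}$, not \PSPACE. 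Your appeal to $\mathrm{AP} = \mathrm{PSPACE}$ requires the plays (hence the running time of the alternating machine) to be polynomially bounded, which you have not established. Moreover, the cut-off rule ``stop when a previously seen knowledge vertex repeats'' cannot be implemented in polynomial space either, since it requires remembering all vertices visited along the current play. The invariant you flag as the delicate point (that every reachable knowledge is a Boolean combination of the $\opponents{v,a,v'}$ with a polynomial-size representation) only bounds the size of a \emph{single} configuration; it says nothing about the length of plays or the depth of the recursion, which is where the exponential blow-up actually threatens.

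The missing idea is a second structural invariant: along any play the knowledge component is non-increasing (each step intersects it with some $\opponents{v,a,v'}$), so it takes at most polynomially many \emph{distinct} values --- one more than the number of constraints $\opponents{v,a,v'}$ appearing in the arena. This is what licenses the decomposition used in~\cite{BertrandBM19} and recalled in \Cref{sec:experiments}: restrict the knowledge game to a maximal region of constant knowledge $K$; this sub-game has polynomially many vertices and can be solved by a standard attractor computation in \PTIME; then recurse only across knowledge refinements, of which there are polynomially many on any branch. The resulting DFS over the (exponential-size, but polynomial-depth) tree of sub-games uses a polynomial-depth stack with polynomial-size cells, which is what actually yields \PSPACE. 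Without this (or an equivalent argument bounding the number of alternation steps polynomially), your upper bound only establishes membership in $\mathrm{EXPTIME}$.
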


The purpose of the current paper is to use lattices and antichains to
compute the winning region of Eve.
We first recap some results of~\cite{BertrandBM19}.

\paragraph{Knowledge game.}

The DFS algorithm of~\cite{BertrandBM19} first turns a
parameterized reachability game into a standard two-player turn-based
game~\cite[Chapter 2]{GamesOnGraphs}.

\begin{definition}[Knowledge game~\cite{BertrandBM19}]\label{def:knowledge}
  Let \(\game = ((\vertices, \alphabet, \transitions), \target)\) be a
  concurrent parameterized reachability game. The \emph{knowledge
  game} associated with \(\game\) is the two-player turn-based
  reachability game
  \(\kGame = (\verticesE \cup \verticesA, \kTransitions, \final)\)
  between Eve and Adam, such that
  \(\verticesE = \vertices \times \subsets{\natnonneg}\) and
  \(\verticesA = \verticesE \times \alphabet\) are Eve and Adam
  vertices, respectively;
  \(\final = \{(\target, K) \mid K \subseteq \natnonneg\}\) is the set
  of target vertices; and
  \(\kTransitions \subseteq (\verticesE \times \verticesA) \cup
  (\verticesA \times \verticesE)\) is the edge relation defined as follows:
  \begin{itemize}
    \item
      for all \((v, K) \in \verticesE\) and \(a \in \alphabet\)
      enabled at \(v\),
      \(((v, K), (v, K, a)) \in \kTransitions\);
    \item
      for all \((v, K, a) \in \verticesA\) and \(v' \in \vertices\),
      \(((v, K, a), (v', K \cap \opponents{v, a, v'})) \in \kTransitions\)
      whenever \(K \cap \opponents{v, a, v'} \neq \emptyset\).
  \end{itemize}
\end{definition}

Intuitively, each vertex \((v, K)\) owned by Eve records
her current knowledge on the number of her opponents.  We call the
set \(K\) a \emph{knowledge set}.  When she selects a successor
\((v, K, a)\), Adam picks a vertex
\((v', K \cap \opponents{v, a, v'})\) and the knowledge of Eve is updated:
she learns that the number of opponents cannot be
some \(k\), retains this information forever, and can use it when
selecting a later edge.
Notice that the reachable part
of \(\kGame\) from any \((v, \natnonneg)\) is finite: one can show that
for every vertex \((v, K)\) that Eve can reach, \(K\) is the
intersection of finitely many sets of the form
\(\opponents{v', a, v''}\), or \(\natnonneg\).

A \emph{strategy} for Eve in \(\kGame\) is a mapping
\(\kStrat : {(\verticesE \cdot \verticesA)}^* \cdot \verticesE \to \verticesA\)
compatible with \(\kTransitions\).
We borrow standard notions of outcomes and winning strategies from
the literature.

\begin{theorem}[\cite{BertrandBM19}]\label{thm:knowledge:strategy}
  Eve has a winning strategy
  from \(v\) in \(\game\) if and only if she has a winning
  strategy from \((v, \natnonneg)\) in \(\kGame\).
\end{theorem}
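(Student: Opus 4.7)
My plan is to exploit the natural projection between plays of $\kGame$ and histories of $\game$ to transfer winning strategies in both directions. Given any history $h = v_0 a_0 v_1 \cdots v_i \in \histories$, I would associate with it the unique $\kGame$-play
\[
\rho(h) = (v_0, K_0)(v_0, K_0, a_0)(v_1, K_1)(v_1, K_1, a_1) \cdots (v_i, K_i),
\]
where $K_0 = \natnonneg$ and $K_{j+1} = K_j \cap \opponents{v_j, a_j, v_{j+1}}$. A straightforward induction on $i$ would establish the key invariant: for every $k \in \natnonneg$, the sequence $h$ is a $k$-history if and only if $k \in K_i$. In particular, the knowledge sets along $\rho(h)$ remain non-empty whenever $h \in \histories$, so $\rho(h)$ is a legal play in $\kGame$.

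For the direction $(\Leftarrow)$, from a winning strategy $\kStrat$ for Eve in $\kGame$ from $(v, \natnonneg)$, I would define $\strat(h)$ to be the action $a$ such that $\kStrat(\rho(h)) = (v_i, K_i, a)$. To verify $\strat$ is winning from $v$, I would fix any $k \in \natnonneg$ and any $\pi = v_0 a_0 v_1 \cdots \in \outcome(\strat, v, k)$. By construction $\rho(\pi)$ follows $\kStrat$ at Eve's vertices, and each Adam transition from $(v_j, K_j, a_j)$ to $(v_{j+1}, K_{j+1})$ is legal because the invariant puts $k$ in both $K_j$ and $\opponents{v_j, a_j, v_{j+1}}$. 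Hence $\rho(\pi)$ is an outcome of $\kStrat$ against a well-defined Adam strategy that mimics $\pi$, so it reaches $\final$, which forces $\pi$ to reach $\target$.

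For the direction $(\Rightarrow)$, from a winning $\strat$ in $\game$ from $v$, I would define $\kStrat$ by $\kStrat(\rho) = (v_i, K_i, \strat(h))$, where $h$ is the $\game$-projection of the $\kGame$-play $\rho$ ending at $(v_i, K_i)$; this is well-defined since for any $k \in K_i$ (which exists along reachable plays), the invariant ensures $h \in \histories$. Suppose for contradiction that some Adam strategy produces an outcome $\rho^\ast$ of $\kStrat$ from $(v, \natnonneg)$ that never visits $\final$. The knowledge sets along $\rho^\ast$ form a non-increasing chain of non-empty sets, and since the portion of $\kGame$ reachable from $(v, \natnonneg)$ is finite (as recalled in the excerpt), this chain eventually stabilizes at some non-empty $K_\infty$. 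Fixing any $k \in K_\infty$, the invariant makes every prefix of the projection a $k$-history compatible with $\strat$, so the entire projection is an infinite play in $\outcome(\strat, v, k)$ avoiding $\target$, contradicting the assumption that $\strat$ is winning.

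The hard part will be this last step of $(\Rightarrow)$: extracting a \emph{single} number $k$ of opponents that witnesses the bad play in $\game$. The argument hinges critically on the finiteness of the reachable portion of $\kGame$, which turns the decreasing chain of knowledge sets into an eventually constant one and thus guarantees a non-empty intersection. Without this finiteness, the intersection could well be empty (for instance if $K_j = \{k \mid k \geq j\}$), and the ``diagonal'' extraction of a witness $k$ would fail; everything else in the proof is routine bookkeeping around the projection $\rho(\cdot)$.
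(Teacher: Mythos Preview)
The paper does not prove this theorem: it is quoted from~\cite{BertrandBM19} and no proof or sketch appears anywhere in the present paper or its appendix, so there is nothing here to compare your argument against.

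That said, your sketch is correct. The projection $\rho(\cdot)$ together with the invariant ``$h$ is a $k$-history iff $k\in K_i$'' is the natural bridge between $\game$ and $\kGame$, and both directions go through as you describe. You have also correctly isolated the only non-routine step: in $(\Rightarrow)$, turning a bad $\kGame$-play into a bad $\game$-play requires a \emph{single} witness $k$ for the whole infinite play, and your stabilisation argument via the finiteness of the reachable part of $\kGame$ (stated in the paper just before the theorem) is exactly what is needed; your example $K_j=\{k\mid k\ge j\}$ shows the step would fail without it. One small cosmetic point: your $\kStrat$ is defined only on $\kGame$-histories of the form $\rho(h)$, so you should extend it arbitrarily elsewhere---but since every play from $(v,\natnonneg)$ in $\kGame$ has this form, this is immaterial.
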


\begin{figure}[t]
  \centering
  \begin{tikzpicture}[
      game,
      node distance = 27pt and 70pt,
    ]
    \node [eve]                             (v N)       {\(v, \natnonneg\)};
    \node [adam, below=of v N]              (v N c)     {\(v, \natnonneg, c\)};
    \node [adam, right=of v N]              (v N a)     {\(v, \natnonneg, a\)};
    \node [eve, below=of v N c]              (s leq 2)  {\(s,
    \natsubs{\leq 2}\)};
    \node [eve, accepting, right=of v N c]  (t gt 2)    {\(t,  \natsubs{> 2}\)};
    \node [eve, right=of v N a]             (x1 N)      {\(x_1, \natnonneg\)};
    \node [eve, below=of x1 N]              (x2 N)      {\(x_2, \natnonneg\)};
    \node [adam, right=of x1 N]             (x1 N b)    {\(x_1,
    \natnonneg, b\)};
    \node [adam, right=of x2 N]             (x2 N b)    {\(x_2,
    \natnonneg, b\)};
    \node [eve, above=of x1 N]              (v neq 1)   {\(v,
    \natsubs{\neq 1}\)};
    \node [eve, below=of x2 N]              (v neq 2)   {\(v,
    \natsubs{\neq 2}\)};
    \node [right=of v neq 2]                            {\ldots};
    \node [eve, right=of x1 N b]            (y1 1)      {\(y_1, \{1\}\)};
    \node [eve, right=of x2 N b]            (y2 2)      {\(y_2, \{2\}\)};
    \node [adam, right=of y1 1]             (y1 1 b)    {\(y_1, \{1\}, b\)};
    \node [adam, right=of y2 2]             (y2 2 b)    {\(y_2, \{2\}, b\)};
    \node [eve, accepting, above=of y1 1 b] (t 1)       {\(t, \{1\}\)};
    \node [eve, accepting, below=of y2 2 b] (t 2)       {\(t, \{2\}\)};
    \node [adam, right=of v neq 1]          (v neq 1 c) {\(v,
    \natsubs{\neq 1}, c\)};
    \node [right=of v neq 1 c]                          {\ldots};
    \node [adam, above=of v neq 1]          (v neq 1 a) {\(v,
    \natsubs{\neq 1}, a\)};
    \node [eve, left=of v neq 1 a]          (x1 neq 1)  {\(x_1,
    \natsubs{\neq 1}\)};
    \node [eve, right=of v neq 1 a]         (x2 neq 1)  {\(x_2,
    \natsubs{\neq 1}\)};
    \node [right=of x2 neq 1]                           {\ldots};
    \node [adam, left=of x1 neq 1]          (x1 neq 1 a){\(x_1,
    \natsubs{\neq 1}, b\)};

    \foreach \s/\t in {v N/v N a,v N/v N c,
      v N c/t gt 2,
      v N a/x2 N,
      x1 N/x1 N b,
      x2 N/x2 N b,
      x1 N b/y1 1,
      x2 N b/y2 2,
      y1 1/y1 1 b,
      y2 2/y2 2 b,
      v neq 1/v neq 1 c,v neq 1/v neq 1 a,
      v neq 1 a/x2 neq 1,
    x1 neq 1/x1 neq 1 a} {
      \path (\s) edge (\t);
    }

    \foreach \s/\t in {v N a/x1 N,
      x1 N b/v neq 1,
      v neq 1 a/x1 neq 1,
      x2 N b/v neq 2,
      y1 1 b/t 1,
      y2 2 b/t 2,
    v N c/s leq 2} {
      \path (\s) edge [adamStr] (\t);
    }

    \draw [rounded corners = 10pt, adamStr]
    let
    \p{s} = (x1 neq 1 a.south),
    \p{t} = (v neq 1.west),
    in
    (\p{s}) -- (\x{s}, \y{t})
    -- (\p{t})
    ;
  \end{tikzpicture}
  \caption{Fragment of the knowledge game from \(v\) for the arena of
  \Cref{fig:arena:nondeterministic}.}\label{fig:knowledgeGame}
\end{figure}
\begin{example}\label{ex:knowledgeGame}
  \Cref{fig:knowledgeGame} represents a fragment of the knowledge
  game associated
  with the parameterized arena of \Cref{fig:arena:nondeterministic}.
  Oval vertices belong to Eve, and rectangular ones to Adam.
  In this two-player game, Eve does not have a winning strategy
  from \((v, \natnonneg)\).
  A memoryless counterstrategy for Adam is drawn in bold in the
  figure. In particular, if Eve chooses to go to
  \((v, \natnonneg, a)\), Adam's strategy forces a cycle in the game
  that never visits \(t\).
  However, for any \(K \subseteq \natnonneg\), Eve has a winning
  strategy from \((y_1, K)\) and from \((y_2, K)\). In particular, she
  wins from \(y_1\) and from \(y_2\) in the parameterized game, thanks
  to \Cref{thm:knowledge:strategy}.
\end{example}

\section{Symbolic algorithms for parameterized games}\label{sec:symbolic}

Let \(\game\) be a parameterized reachability game, and \(\kGame\) be
its associated knowledge game.  Our objective is to compute Eve's
winning region in \(\kGame\).  Thanks to
\Cref{thm:knowledge:strategy}, deriving Eve's winning region in
\(\game\) will then be straightforward.  Instead of storing every
vertex of the winning region in \(\kGame\), we obtain a more compact
representation by leveraging \emph{antichains}.  We first formally
define antichains and their properties, before specializing them to
represent vertices in \(\kGame\).  We then give a fixed-point
characterization of the winning region via antichains and show its
correctness, before providing a different, more operational
fixed-point approach. We finally prove the non-trivial
fact that both algorithms compute the same sets, \ie, they both
output Eve's winning region in \(\kGame\).

\subsection{A summary of the theory of antichains}\label{subsec:antichains}

Our presentation is inspired from~\cite{DDR-hscc06}.  In
short, antichains are sets of incomparable elements of a partially
ordered set.  Formally, for \(X\) a set, a \emph{partial order} on $X$
is a binary relation \({\parOrder} \subseteq X \times X\) that is
reflexive, transitive, and antisymmetric (\ie, for any \(a, b \in X\),
if \(a \parOrder b\) and \(b \parOrder a\), then \(a = b\)).  The pair
\((X, \parOrder)\) denotes a \emph{partially ordered set}.  An
\emph{antichain} for \((X, \parOrder)\) is a subset \(X' \subseteq X\)
such that any two elements of \(X'\) are incomparable for
\(\parOrder\): \(\forall a \neq b \in X'\), \(a \not\parOrder b\) and
\(b \not\parOrder a\).

Assume \((X, \parOrder)\) is a partially ordered set, and let
\(L \subseteq X\) be a set. An antichain can be used to represent or
abstract \(L\) by keeping the maximal elements for \(\parOrder\) only.
We say that an element \(x \in X\) is \emph{dominated} by \(L\) if
there exists some \(x' \in L\) such that \(x \parOrder x'\) and
\(x \neq x'\).  We write \(\Dom(L)\) for the set of all elements of
\(X\) that are dominated by \(L\):
\(\Dom(L) = \{x \in X \mid \exists x' \in L : x \parOrder x' \land x
\neq x'\}\).

The \emph{reduced form} of \(L\), noted \(\reduce{L}\), is obtained by
removing from \(L\) all its dominated elements, while the
\emph{expanded form} of \(L\), noted \(\expand{L}\), is obtained by
adding to $L$ all dominated elements in $X$.  That is,
\( \reduce{L} = L \setminus \Dom(L) \) and
\( \expand{L} = L \cup \Dom(L)\).
As \(\Dom(L) \cap \reduce{L} = \emptyset\) (\ie, \(\reduce{L}\) does not contain
any element that is dominated by \(L\)), any two distinct elements
of \(\reduce{L}\) are incomparable for \(\parOrder\), and
\(\reduce{L}\) is an antichain.
Notice that, if \((X, \parOrder)\) forms a complete lattice,
then, for any non-empty subset \(L\) of \(X\), there is a least one
element that is not dominated by \(L\). That is, \(L \nsubseteq
\Dom(L)\), meaning that \(\reduce{L}\) is not empty.

\begin{proposition}\label{prop:expand_is_expand_reduce}
  For any set \(L \subseteq X\),
  \(\reduce{L}\) is an antichain for
  \((X, \parOrder)\). Furthermore,
  if \((X, \parOrder)\) is a complete lattice,
  \(\expand{L} = \expand{\reduce{L}}\) and
  $\reduce{L} = \reduce{\expand{L}}$.
\end{proposition}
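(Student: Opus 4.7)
The three claims follow from definitions plus basic order-theoretic bookkeeping; the main subtlety lies in one direction of the second equality.

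\emph{Step 1 (antichain property).} To show $\reduce{L}$ is an antichain, I take distinct $x, y \in \reduce{L}$ and assume $x \parOrder y$. Since $y \in L$ and $x \neq y$, this puts $x$ in $\Dom(L)$, contradicting $x \in \reduce{L} = L \setminus \Dom(L)$. Swapping the roles of $x$ and $y$ rules out $y \parOrder x$, so $x$ and $y$ are incomparable.

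\emph{Step 2 (equality $\expand{L} = \expand{\reduce{L}}$).} The inclusion $\expand{\reduce{L}} \subseteq \expand{L}$ follows from monotonicity of both $\Dom(\cdot)$ and $\cup$: since $\reduce{L} \subseteq L$, one has $\reduce{L} \cup \Dom(\reduce{L}) \subseteq L \cup \Dom(L)$. For the converse, let $x \in \expand{L} = L \cup \Dom(L)$; if $x \in \reduce{L}$ we are done, otherwise $x$ is dominated in $L$, and I need to exhibit a witness sitting in $\reduce{L}$, not merely in $L$. I would pick a $\parOrder$-maximal $x^{\star} \in L$ with $x \parOrder x^{\star}$, observe $x \neq x^{\star}$ (otherwise $x$ would itself be maximal in $L$ and thus in $\reduce{L}$), and conclude $x \in \Dom(\reduce{L}) \subseteq \expand{\reduce{L}}$.

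\emph{Step 3 (equality $\reduce{L} = \reduce{\expand{L}}$).} I would first show $\Dom(\expand{L}) = \Dom(L)$. The inclusion $\Dom(L) \subseteq \Dom(\expand{L})$ follows from $L \subseteq \expand{L}$. For the converse, given $x$ dominated by some $x' \in \expand{L}$ with $x \neq x'$, either $x' \in L$ and we are done, or $x' \in \Dom(L) \setminus L$ and it is dominated by some $x'' \in L$; transitivity then gives $x \parOrder x''$, while antisymmetry combined with $x \neq x'$ forces $x \neq x''$, so $x \in \Dom(L)$. The required equality follows by a one-line computation:
\[
  \reduce{\expand{L}} = \expand{L} \setminus \Dom(\expand{L}) = (L \cup \Dom(L)) \setminus \Dom(L) = L \setminus \Dom(L) = \reduce{L}.
\]

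\emph{Main obstacle.} The only non-routine point is the existence of the maximal witness $x^{\star}$ in Step 2: it requires $(X, \parOrder)$ to be well-founded upward on $L$ (equivalently, every ascending chain in $L$ is eventually stationary). This is not spelled out in the proposition itself, but is granted in the paper's setting, where the relevant lattices of knowledge sets are finite; I would make the hypothesis explicit. Everything else reduces to manipulating the two definitions $\reduce{L} = L \setminus \Dom(L)$ and $\expand{L} = L \cup \Dom(L)$.
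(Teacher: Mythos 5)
Your proof is correct, and in fact more careful than the paper: the paper only argues the antichain part informally in the sentence preceding the proposition and states the two equalities without any proof, so there is nothing to compare your Steps~2 and~3 against except the obvious intended route, which is the one you take. Step~1 matches the paper's own one-line argument. Note also that Step~3 and the inclusion \(\expand{\reduce{L}} \subseteq \expand{L}\) of Step~2 hold unconditionally, exactly by the computations you give.

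The obstacle you flag in Step~2 is genuine and worth stating plainly: the converse inclusion \(\expand{L} \subseteq \expand{\reduce{L}}\) is \emph{false} for arbitrary subsets of an arbitrary poset, so the proposition as printed is not literally correct. A counterexample inside the paper's own lattice \((\subsets{\natnonneg}, \subseteq)\): take \(L\) to be the family of all finite subsets of \(\natnonneg\). Every element of \(L\) is strictly contained in another element of \(L\), so \(L \subseteq \Dom(L)\) and \(\reduce{L} = \emptyset\), whence \(\expand{\reduce{L}} = \emptyset\) while \(\expand{L} = L \neq \emptyset\). Your repair --- requiring that every element of \(L\) lie below a \(\parOrder\)-maximal element of \(L\), e.g.\ via an ascending chain condition on \(L\) --- is the right one, and your construction of the witness \(x^{\star}\) then goes through. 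One small caveat on your closing remark: it is slightly too quick to say the hypothesis is granted because the relevant lattices are finite, since the paper also runs the fixed-point \({(\W^i)}_{i \in \nat}\) in the infinite lattice \(\subsets{\natnonneg}\); there the hypothesis must be checked for the particular sets \(L\) that arise (it does hold for them, as is clear a posteriori from \(\W^i = \WKA^i\) and the finiteness of \(\latticeKA\), but the proposition's ``for any set \(L\)'' does not deliver it for free).
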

A set $L$ such that $L = \expand{L}$ is said \emph{downward-closed}.
Antichains are particularly adapted to representing
downward-closed sets. Indeed, if $L$ is downward-closed,
$\expand{\reduce{L}} = L$, so that the set $L$ can be uniquely
determined from the antichain $\reduce{L}$.

For two sets \(L, L' \subseteq X\), we define two
operations.
\begin{description}
  \item[Join] An element \(x\) in
    \(\expand{L} \cup \expand{L'}\) is maximal when it is maximal in
    one of the two sets \(L\) or \(L'\). Hence, we can define the
    \emph{join} operation of \(L\) and \(L'\) as
    \(L \lub L' = \reduce{L \cup L'} = \reduce{\expand{L} \cup
    \expand{L'}}\).
    Notice that \(L \lub L' = \reduce{\reduce{L} \cup \reduce{L'}} =
    \reduce{L} \lub \reduce{L'}\).
  \item[Meet] We define the \emph{meet}
    operation of \(L\) and \(L'\) as
    \(L \glb L' = \reduce{\expand{L} \cap \expand{L'}}\). Note that it
    does not coincide in general with $\reduce{L \cap L'}$.
\end{description}
We emphasize that \(L \lub L'\) and \(L \glb L'\) are antichains
for \((X, \parOrder)\).  The join operation provides the least upper
bound, while the meet operation gives the greatest lower bound.
Thus, the set of all antichains of \((X, \parOrder)\) forms a
\emph{complete lattice}.

Finally, we lift $\parOrder$ to subsets of $X$: for two
subsets $L,L' \subseteq X$, we write $L \parOrdersim L'$ whenever for
every $a \in L$, there is $a' \in L'$ such that $a \parOrder a'$. In
particular,
$L \subseteq L'$ implies $L \parOrdersim L'$,
$\reduce{L} \parOrdersim L \parOrdersim \expand{L}$, and
$\expand{L} \parOrdersim L \parOrdersim \reduce{L}$.

The following lemma, whose proof is in \Cref{app:proofs:antichain},
relates several of the above-defined notions, and will be useful in
the sequel:
\begin{restatable}{lemma}{lemmareduce}\label{lemma:reduce}
  For any partially ordered set \((X, \parOrder)\)
  forming a complete lattice
  and \(L, L' \subseteq X\), the three
  following properties are
  equivalent:\[
    \mathit{1.}\ \reduce{L} = \reduce{L'}; \qquad \mathit{2.}\ L
    \parOrdersim L'\ \text{and}\ L'
    \parOrdersim L; \qquad \mathit{3.}\ L \subseteq
    \expand{L'}\ \text{and}\ L' \subseteq \expand{L}.
  \]
\end{restatable}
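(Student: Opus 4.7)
The plan is to prove the three conditions equivalent by establishing the cycle $(1) \Rightarrow (2) \Rightarrow (3) \Rightarrow (1)$, leveraging the identities $\expand{L} = \expand{\reduce{L}}$ and $\reduce{L} = \reduce{\expand{L}}$ stated in the preceding proposition.

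For $(1) \Rightarrow (2)$, from $\reduce{L} = \reduce{L'}$ I immediately get $\expand{L} = \expand{\reduce{L}} = \expand{\reduce{L'}} = \expand{L'}$. Then, for any $a \in L \subseteq \expand{L'} = L' \cup \Dom(L')$, either $a \in L'$ and I take witness $a$ itself, or $a \in \Dom(L')$ and the definition of $\Dom$ directly provides a witness $a' \in L'$ with $a \parOrder a'$. This gives $L \parOrdersim L'$, and $L' \parOrdersim L$ follows by symmetry.

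For $(2) \Rightarrow (3)$, given $a \in L$, the assumption provides $a' \in L'$ with $a \parOrder a'$. Either $a = a'$, in which case $a \in L' \subseteq \expand{L'}$, or $a \neq a'$, in which case $a \in \Dom(L') \subseteq \expand{L'}$. The argument for $L' \subseteq \expand{L}$ is symmetric.

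The main obstacle is $(3) \Rightarrow (1)$, where I must show that elements of $\reduce{L}$ survive in $\reduce{L'}$ despite the fact that domination can propagate along chains. Given $a \in \reduce{L}$, I first argue that $a \in L'$: from $a \in L \subseteq \expand{L'} = L' \cup \Dom(L')$, assuming $a \in \Dom(L')$ yields some $b \in L'$ with $a \parOrder b$ and $a \neq b$; then $b \in L' \subseteq \expand{L}$, so either $b \in L$ (so that $a \in \Dom(L)$ directly), or there is $c \in L$ with $b \parOrder c$ and $b \neq c$. In the latter case, transitivity gives $a \parOrder c$, and $a = c$ would force $a \parOrder b \parOrder a$, hence $a = b$ by antisymmetry, contradicting $a \neq b$; so $a \neq c$ and thus $a \in \Dom(L)$. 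Both sub-cases contradict $a \in \reduce{L}$, so $a \in L'$. The very same argument, applied to an arbitrary putative dominator of $a$ in $L'$, shows $a \notin \Dom(L')$, hence $a \in \reduce{L'}$. Swapping the roles of $L$ and $L'$ yields $\reduce{L'} \subseteq \reduce{L}$, closing the cycle.
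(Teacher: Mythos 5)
Your proof is correct, and it follows the same overall cyclic decomposition $(1) \Rightarrow (2) \Rightarrow (3) \Rightarrow (1)$ as the paper, with an identical argument for $(2) \Rightarrow (3)$; the two remaining implications, however, are executed differently, and the difference is worth noting. For $(1) \Rightarrow (2)$ the paper picks, for each $a \in L$, some $a' \in \reduce{L}$ with $a \parOrder a'$, whereas you route the argument through the stated identity $\expand{L} = \expand{\reduce{L}}$; both variants ultimately rest on the tacit fact that every element of $L$ lies below some maximal element of $L$ (which holds in finite posets or under an ascending chain condition, but not for an arbitrary subset of $(\subsets{\natnonneg}, \subseteq)$), though you at least confine that reliance to a proposition the paper has already asserted. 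The more substantial divergence is in $(3) \Rightarrow (1)$: the paper again invokes ``there is $b \in \reduce{L'}$ such that $a \parOrder b$,'' which presupposes the same dominating-maximal-element property for $L'$, while you argue by contradiction directly from the definitions of $\Dom$, $\expand{\cdot}$ and $\reduce{\cdot}$, building the chain $a \parOrder b \parOrder c$ with $b \in L'$ and $c \in L$ and closing it with transitivity and antisymmetry. Your version of that implication is therefore valid for arbitrary posets and arbitrary subsets, which is a genuine (if modest) strengthening over the paper's write-up; the paper's version buys a slightly shorter argument at the cost of an unstated hypothesis that happens to be harmless in the finite lattices where the lemma is later applied.
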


\subsection{Lattices and antichains for the vertices of
Eve}\label{subsec:appli-antichains}

Aiming at using antichains to compute the winning region for Eve in
\(\kGame\), we make the following observation: if Eve has a winning
strategy from some \((v, K) \in \verticesE\), then she also wins from
any \((v, K')\) such that \(K' \subseteq K\).  In words, refining the
knowledge on the number of opponents can only help Eve to reach a
target vertex. Therefore, her winning region is downward-closed for a
natural partial order on her vertices in $\kGame$, which we now define.

Given a set \(S\), we write \(\subsets{S}\) for the set of all subsets
of \(S\).  Now, one naturally defines the following partial order on
vertices of Eve in the knowledge game: for any
\((v, K), (v', K') \in \verticesE\), \((v, K) \parOrder (v', K')\) if
\(v = v'\) and \(K \subseteq K'\). Equipped with this partial order,
$(\verticesE,\parOrder)$ is a complete lattice (as
  \((\subsets{\natnonneg}, \subseteq)\) is a complete
lattice), which will be
used to compute the winning region for Eve in \(\kGame\).
Hence, we can safely apply
\Cref{prop:expand_is_expand_reduce} and \Cref{lemma:reduce} on the
sets of knowledge sets.

We use the notation \(\expand{\cdot}\) for downward closure both in
\(\subsets{\natnonneg}\) and \(\verticesE\) (no
  confusion should arise since elements of these two lattices are
``typed'' differently). Recall that $\reduce{\cdot}$
denotes the maximal elements.

\begin{example}
  Let \(\arena\) be the arena of \Cref{fig:arena:nondeterministic} and
  \(L\) be
  \(\{(t, \natnonneg), (y_1, \{1\}),\allowbreak (y_1, \{1, 2\}),
  (y_2, \{1, 2\})\}\).
  The pair \((y_1, \{1\})\) is dominated by \(L\), as \((y_1, \{1, 2\}) \in L\),
  \((y_1, \{1\}) \parOrder (y_1, \{1, 2\})\), and
  \((y_1, \{1\}) \neq (y_1, \{1, 2\})\).
  Moreover, any \((t, K)\) with \(K \subsetneq \natnonneg\) is dominated by
  \(L\).
  However, \((y_1, \{1, 2\})\) is not dominated, as the only element \((v, K)\)
  of \(L\) such that \((y_1, \{1, 2\}) \parOrder (v, K)\) is equal to
  \((y_1, \{1, 2\})\).
  Likewise, \((t, \natnonneg)\) and \((y_2, \{1, 2\})\) are not dominated.
  Thus,
  \begin{equation*}
    \Dom(L) = \{(t, K) \mid K \subsetneq \natnonneg\}
    \cup
    \{(y_1, \emptyset), (y_1, \{1\}), (y_1, \{2\}),
    (y_2, \emptyset), (y_2, \{1\}), (y_2, \{2\})\},
  \end{equation*}
  and \(\reduce{L} = L \setminus \Dom(L) =
  \{(t, \natnonneg), (y_1, \{1, 2\}), (y_2, \{1, 2\})\}\).
\end{example}

\subsection{Fixed-point characterization of the winning
region}\label{subsec:symbolic-algo}
Similarly to the case of standard two-player turn-based reachability
games, the winning region of Eve in the knowledge game
is the least fixed-point of an analogue of the
constrained-predecessor
operator.  Since
\(\final = \{(\target, K) \mid K \subseteq \natnonneg\}\), the
fixed-point is initialized with \((\target, \natnonneg)\), as well as
all \((v, \emptyset)\), for every
\(v \in \vertices \setminus \{\target\}\) (here $(v,\emptyset)$
  represents that a priori
  Eve loses against any number of players from
$v$). The operator then reflects the edges in \(\kGame\) by
incorporating all pairs \((v, K)\) such that for some action $a$
enabled at $v$, \((v', K \cap \opponents{v, a, v'})\) is already known
to be winning for every successor vertex $v'$.

\begin{definition}\label{def:symbolic}\label{def:fixpoint}
  For $(v, a) \in \vertices \times \alphabet$ and \(S \subseteq
  \verticesE\), let
  \(\KPred{v,a}{S} \) be the operator on $\subsets{\natnonneg}$,
  and let \(\Pred{S}\) be the operator on \(\subsets{\verticesE}\)
  defined as
  \begin{gather*}
    \KPred{v,a}{S} = \{K \in \subsets{\natnonneg} \mid \forall v' \in
    \vertices : (v', K \cap \opponents{v, a, v'}) \in S\}
    \\
    \Pred{S} = \bigcup_{v\in \vertices} \bigcup_{a\in \En(v)}\{(v,K)
    \mid K \in \KPred{v,a}{S}\}.
  \end{gather*}
  Finally, the sets \({(\W^i)}_{i \in \nat} \subseteq \verticesE\) are
  defined inductively:
  \begin{align*}
    \W^0 &= \{(\target, \natnonneg)\} \cup
    \{(v, \emptyset) \mid v \in \vertices \setminus \{\target\}\}
    \\
    \forall i \geq 0 : \W^{i+1} &= \W^i \lub
    \Pred{\expand{\W^i}} \quad\quad
    \text{\small (computation made in
    lattice $(\verticesE,\parOrder)$)}.
  \end{align*}
\end{definition}
The sets ${(\W^i)}_{i \in \nat}$ are well-defined as subsets of
$\verticesE$. By definition of the join operation,
we only keep the maximal elements with respect to \(\parOrder\):
each \(\W^i\) is an antichain of \((\verticesE, \parOrder)\).
Furthermore, we only consider actions \(a\) that are enabled at \(v\)
to ensure that there exists \(v'\) such that \(\opponents{v,a,v'}\) is
not empty.
Interestingly, there is no need to construct \(\kGame\) to define the
iterations of \(\W^i\).

The limit
$\lim_{i \to +\infty} \expand{\W^i} = \bigcup_{i \in \natnonneg}
\expand{\W^i} $ always exists,
since the sequence ${(\expand{\W^i})}_{i \in \nat}$ is non-decreasing.
Nevertheless, it is unclear whether
$\lim_{i \to +\infty} \W^i$, denoted \(\W^\infty\) hereafter, exists.
To show that the sequence stabilizes, we will require
an alternative approach, which will be proved to terminate. We will
also argue that both compute the same sets at each iteration. Hence,
\(\W^\infty\) exists.

\begin{example}\label{ex:WK}
  Let \(\game = (\arena, t)\), with
  \(\arena\) the arena of \Cref{fig:arena:nondeterministic}.
  We illustrate the approach with the full lattice \(\subsets{\natnonneg}\).
  The initial set is
  \[
    \W^0 = \{(s, \emptyset), (v, \emptyset), (x_1, \emptyset), (x_2, \emptyset),
    (y_1, \emptyset), (y_2, \emptyset), (t, \natnonneg)\}.
  \]
  Now, \(\KPred{v, a}{\expand{\W^0}}\) contains \(\natsubs{>2}\), as
  \((t, \natsubs{>2}) \in \expand{\W^0}\) and, for every \(v' \neq t\),
  \((v', \emptyset) \in \expand{\W^0}\).  Similarly,
  \(\KPred{v, a}{\expand{\W^0}}\) contains
  \(\{3\}, \{4\}, \dotsc, \{3, 4\},
  \dotsc, \allowbreak{\natsubs{>3}, \natsubs{>4},} \dotsc\)
  which are all subsets of \(\natsubs{>2}\).  Recall that
  \(\W^0 \sqcup \Pred{\expand{\W^0}}\) only contains the non-dominated
  elements, \ie, only \(\natsubs{>2}\) in this case.  Moreover,
  \(\KPred{y_1, b}{\expand{\W^0}} = \KPred{y_2, b}{\expand{\W^0}}\)
  contains \(\natnonneg\) and, thus, every subset of \(\natnonneg\).
  Yet, we have
  \(\KPred{x_1, b}{\expand{\W^0}} = \KPred{x_2, b}{\expand{\W^0}} =
  \{\emptyset\}\), as \((v, \emptyset), (y_1, \emptyset)\), and
  \((y_2, \emptyset)\) all belong to \(\W^0\). So,
  \[
    \W^1 = \{(s, \emptyset), (v, \natsubs{>2}), (x_1,
      \emptyset), (x_2, \emptyset), (y_1, \natnonneg), (y_2,
    \natnonneg), (t, \natnonneg)\}.
  \]
  Observe that, at each iteration, it is sufficient to consider only
  the predecessors of the vertices whose knowledge set changed, \ie,
  only \(x_1\) and \(x_2\) in this example.  We focus on \(x_1\), as
  the computations for
  \(x_2\) are similar:
  \(\KPred{x_1, b}{\expand{\W^1}}\) does not contain \(\natnonneg\),
  as \((v, \natsubs{\neq 1}) \notin \expand{\W^1}\).  However,
  \(\natsubs{\neq 2} \in \KPred{x_1, b}{\expand{\W^1}}\), as
  \((v, \natsubs{\neq 2} \cap \natsubs{\neq 1}) = (v,
  \natsubs{>2}) \in \expand{\W^1}\) and
  \((y_1, \natsubs{\neq 2} \cap \{1\}) = (y_1, \{1\}) \in
  \expand{\W^1}\).  Hence,
  \[
    \W^2 = \{(s, \emptyset), (v, \natsubs{>2}), (x_1, \natsubs{\neq
      2}), (x_2, \natsubs{\neq 1} ), (y_1, \natnonneg), (y_2,
    \natnonneg), (t, \natnonneg)\}.
  \]
  As only the knowledge sets of \(x_1\) and \(x_2\) are different, it
  suffices to compute \(\KPred{v, a}{\expand{\W^2}}\). Doing so, one
  gets \(\W^3 = \W^2\), hence
  \({(\expand{\W^i})}_{i \in \natnonneg}\) stabilizes.
\end{example}

Let us argue that the least
fixed-point of the operator in \Cref{def:fixpoint} corresponds to the
winning region for Eve in the knowledge game.

\begin{restatable}{theorem}{fixpointcorrectness}\label{thm:symbolic:strategy:region}
  Let \((v, K) \in \verticesE\) such that \(K \neq \emptyset\).  Eve
  has a winning strategy from \((v, K)\) in \(\kGame\) if and only if
  there exists $i \in \natnonneg$ such that
  \((v, K) \in \expand{\W^i}\).
\end{restatable}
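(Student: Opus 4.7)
The plan is to prove the two implications separately by induction on $i$, relying on the fact that from any fixed vertex $(v,K) \in \verticesE$, only finitely many vertices of $\kGame$ are reachable.

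For the $(\Leftarrow)$ direction, I will show by induction on $i$ that if $(v,K) \in \expand{\W^i}$ and $K\neq\emptyset$, then Eve wins from $(v,K)$. The base case $i=0$ reduces to the observation that $(t,\natnonneg) \in \W^0$ dominates every $(t,K)$ with $K\neq\emptyset$. For the inductive step, $(v,K) \in \expand{\W^{i+1}}$ means there exists $(v,K') \in \W^{i+1}$ with $K \subseteq K'$, so either $(v,K') \in \W^i$ (and we apply the induction hypothesis directly), or $(v,K') \in \Pred{\expand{\W^i}}$, witnessed by some $a \in \En(v)$ with $K' \in \KPred{v,a}{\expand{\W^i}}$. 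By monotonicity (since $K \subseteq K'$), we have $K \cap \opponents{v,a,v'} \subseteq K' \cap \opponents{v,a,v'}$ for every $v'$, so $(v',K\cap\opponents{v,a,v'}) \in \expand{\W^i}$ as well. Eve's strategy is then to play $a$: any Adam move leads to a successor $(v',K\cap\opponents{v,a,v'})$ with $K\cap\opponents{v,a,v'} \neq \emptyset$ from which, by the induction hypothesis, she wins.

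For the $(\Rightarrow)$ direction, I will exploit that the reachable fragment of $\kGame$ from $(v,K)$ is finite (as noted just before \Cref{thm:knowledge:strategy}), so classical results on finite two-player reachability games apply. Define the attractor sequence $U_0 = \{(v',K') \in \verticesE \mid v' = t\}$ and $U_{i+1} = U_i \cup \{(v',K') \mid \exists a \in \En(v'), \forall v'' : K' \cap \opponents{v',a,v''}=\emptyset \lor (v'',K'\cap\opponents{v',a,v''}) \in U_i\}$. It is standard that Eve's winning region in $\kGame$ equals $\bigcup_{i\geq 0} U_i$. I will then show by induction on $i$ that $U_i \subseteq \expand{\W^i}$. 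The base case is immediate since $(t,\natnonneg) \in \W^0$. For the inductive step, given $(v',K') \in U_{i+1} \setminus U_i$ witnessed by action $a$, I use the inductive hypothesis on each successor with non-empty knowledge, and observe that for any $v''$ with $K' \cap \opponents{v',a,v''} = \emptyset$, the vertex $(v'',\emptyset)$ lies in $\expand{\W^0} \subseteq \expand{\W^i}$ (since $\W^0$ contains every $(v'',\emptyset)$ for $v''\neq t$, while $(t,\emptyset)$ is dominated by $(t,\natnonneg)$). Hence $K' \in \KPred{v',a}{\expand{\W^i}}$, giving $(v',K') \in \Pred{\expand{\W^i}}$. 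Using $\expand{\W^{i+1}} = \expand{\reduce{\W^i \cup \Pred{\expand{\W^i}}}} = \expand{\W^i \cup \Pred{\expand{\W^i}}}$, we conclude $(v',K') \in \expand{\W^{i+1}}$.

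I expect the main delicate point to be handling the interplay between the reducing/expanding operations and the transition from the $\Pred$ computation to $\W^{i+1}$: one must verify carefully that $\expand{\W^{i+1}}$ captures every element of $\Pred{\expand{\W^i}}$, including those that get removed by the $\lub$ operation because they are dominated. This is precisely where the proposition relating $\expand{L}$ and $\expand{\reduce{L}}$ is invoked. A secondary point is to ensure that the empty-knowledge successors are correctly handled, which motivates the initialization of $\W^0$ with all $(v,\emptyset)$ for $v \neq t$.
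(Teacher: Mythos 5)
Your proof is correct, and one half of it coincides with the paper's argument while the other half takes a recognizably different route. Your $(\Leftarrow)$ direction is essentially the paper's: the paper also inducts on the least $i$ with $(v',K') \in \expand{\W^{i}}$, picks the witnessing action $a$ from the $\Pred{\expand{\W^{i-1}}}$ membership, and argues that every Adam-successor with non-empty knowledge lies in $\expand{\W^{i-1}}$ (the monotonicity step $K \subseteq K' \implies K \cap \opponents{v,a,v'} \subseteq K' \cap \opponents{v,a,v'}$, which you make explicit, is left implicit there); the paper packages this as one global strategy whose ``rank'' strictly decreases along every play, whereas you compose strategies inductively --- same substance. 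For the $(\Rightarrow)$ direction the paper does \emph{not} go through the attractor: it fixes a memoryless winning strategy $\kStrat$ (invoking memoryless determinacy of turn-based reachability games), unfolds all its plays truncated at the first target visit into a tree $\tree$, argues that no vertex repeats on a branch (else Adam forces a losing loop) so that the finitely-branching tree has finite height, and then proves by induction on the height of a node $(v',K')$ in $\tree$ that $(v',K') \in \expand{\W^{i}}$. Your version instead cites the classical identity between the winning region and the finite attractor $\bigcup_i U_i$ on the finite reachable fragment, and shows $U_i \subseteq \expand{\W^i}$; this is cleaner and offloads the combinatorial work (memoryless determinacy, non-repetition on branches, K\H{o}nig-type finiteness) onto a standard black box, at the cost of having to justify that the globally-defined $U_i$ and the attractor of the finite reachable subgame agree in rank --- which holds since successors of reachable vertices are reachable. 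Both arguments hinge on the same two bookkeeping points you correctly flag: $\expand{\W^{i+1}} = \expand{\W^i \cup \Pred{\expand{\W^i}}}$ so that domination by the $\lub$ loses nothing, and the quantification over \emph{all} $v''$ in $\KPred{v',a}{\cdot}$ is satisfied for empty-intersection successors because every $(v'',\emptyset)$ lies in $\expand{\W^0}$.
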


\begin{proof}[Sketch]
  The proof follows the same line as the correctness proof of the
  fixed-point computation of the winning region in standard two-player
  turn-based reachability games.
  From every \((v, K) \in \expand{\W^i}\), one can construct a winning
  strategy by induction on the least index \(j\) such that
  \((v, K) \in \expand{\W^j}\): select an action such
  that any successor \((v', K')\) satisfies
  \((v', K') \in \expand{\W^{j-1}}\). Conversely, if Eve has a winning
  strategy from \((v, K)\), she can ensure to progress towards a
  target vertex.  Since any target vertex necessarily belongs to the
  winning region, we can show that \((v, K)\) is also in the winning
  region.
\end{proof}

From \Cref{thm:knowledge:strategy,thm:symbolic:strategy:region}, we
conclude that
\(\bigcup_{i \in \natnonneg} \expand{\W^i}\) characterizes the winning
region for Eve in the concurrent parameterized game \(\game\).

\begin{restatable}{corollary}{fixpointcorollary}
  Eve has a winning strategy from a vertex \(v_0\) in \(\game\) if and only if
  there exists $i \in \nat$ such that \((v_0, \natnonneg) \in
  \W^i\). Furthermore, if $\W^\infty$ is defined, then Eve has a
  winning strategy from \(v_0\) in \(\game\) if and only if
  \((v_0, \natnonneg) \in \W^\infty\).
\end{restatable}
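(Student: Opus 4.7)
The plan is to chain the two main results already available, namely \Cref{thm:knowledge:strategy} (which bridges $\game$ and $\kGame$) and \Cref{thm:symbolic:strategy:region} (which characterizes winning positions in $\kGame$ via the expanded iterates $\expand{\W^i}$), and then to observe that a single trivial maximality argument lets us replace $\expand{\W^i}$ by $\W^i$ when the second coordinate is $\natnonneg$.

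Concretely, I would first note that since $\natnonneg \neq \emptyset$, \Cref{thm:symbolic:strategy:region} applies to $(v_0, \natnonneg)$ and yields: Eve has a winning strategy in $\kGame$ from $(v_0, \natnonneg)$ if and only if there is some $i \in \nat$ with $(v_0, \natnonneg) \in \expand{\W^i}$. Combined with \Cref{thm:knowledge:strategy}, this already gives the equivalence between Eve winning from $v_0$ in $\game$ and $(v_0, \natnonneg) \in \expand{\W^i}$ for some $i$. To conclude the first part, I would then argue that $(v_0, \natnonneg) \in \expand{\W^i}$ is equivalent to $(v_0, \natnonneg) \in \W^i$: the non-trivial direction is $\Rightarrow$, for which, by definition of $\expand{\cdot}$, there exists $(v_0, K) \in \W^i$ with $(v_0, \natnonneg) \parOrder (v_0, K)$, that is $\natnonneg \subseteq K$; since $K \subseteq \natnonneg$ always holds, $K = \natnonneg$, so $(v_0, \natnonneg) \in \W^i$ directly. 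The reverse inclusion $\W^i \subseteq \expand{\W^i}$ is immediate.

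For the second part, I would assume that $\W^\infty$ is defined, i.e.\ the sequence ${(\W^i)}_{i \in \nat}$ stabilizes. In that case, as observed just before \Cref{ex:WK}, $\expand{\W^\infty} = \bigcup_{i \in \nat} \expand{\W^i}$. Therefore, the existence of some $i$ with $(v_0, \natnonneg) \in \expand{\W^i}$ is equivalent to $(v_0, \natnonneg) \in \expand{\W^\infty}$, which, by the same maximality argument as above, is equivalent to $(v_0, \natnonneg) \in \W^\infty$. Chaining with the first part yields the desired equivalence.

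There is no real obstacle in this proof: it is essentially a bookkeeping step reconciling the two previous theorems with the antichain presentation. The only point worth stating explicitly is the maximality observation that $\natnonneg$ is the top element of $(\subsets{\natnonneg},\subseteq)$, which ensures $(v_0, \natnonneg)$ cannot be strictly dominated in $\verticesE$ and so its membership in $\expand{\W^i}$ coincides with its membership in $\W^i$. The stabilization claim needed for the second statement is exactly the one recalled in the paragraph following \Cref{def:fixpoint}, so no new argument is required.
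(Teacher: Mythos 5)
Your proposal is correct and follows essentially the same route as the paper's own proof: chain \Cref{thm:knowledge:strategy} with \Cref{thm:symbolic:strategy:region} and use the maximality of \((v_0,\natnonneg)\) for \(\parOrder\) to identify membership in \(\expand{\W^i}\) with membership in \(\W^i\). Your explicit treatment of the second part via \(\expand{\W^\infty} = \bigcup_{i}\expand{\W^i}\) is slightly more detailed than the paper's, which leaves that step implicit, but the argument is the same.
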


\subsection{An alternative fixed-point approach}\label{subsec:alternative}
The previous fixed-point characterization cannot be immediately turned
into an algorithm, because, even if the parameterized arena is finite,
the operator $\KPred{\cdot,\cdot}{\cdot}$ collects a priori
unboundedly many sets $K$. Therefore, it is not obvious how to use
it to compute the winning region in the knowledge game. We thus
propose an alternative approach, with an operational viewpoint, to
compute \(\W^{i+1}\) from \(\W^i\).  Let \(v \in \vertices\) and
\(a \in \En(v)\).  Before providing the definition, we give the
intuition behind it:
\begin{itemize}
  \item
    Fix, for each \(v' \in V\), a knowledge set \(K_{v'}\) such that
    \((v', K_{v'}) \in \W^i\).
  \item We want to focus on the vertices \(v'\) that can be reached from
    \(v\) by taking an edge labeled with \(a\) and some constraint on
    the number of opponents.  Since the arena may be non-deterministic,
    there can be multiple such vertices.  Hence, we iterate over each
    alternative, by taking every possible subset \(V'\) of \(V\),
    representing the vertices that we want to reach.

  \item We first compute the intersection of every \(K_{v'}\), with
    \(v' \in V'\).  This yields a set of numbers of opponents for which
    we are sure to have a winning strategy from each \(v'\).  Indeed,
    for every \(v' \in V'\), \((v', K_{v'}) \in \W^i\).  As
    \(\bigcap_{v' \in V'} K_{v'}\) is a subset of each \(K_{v'}\), we have
    \((v', \bigcap_{v' \in V'} K_{v'}) \in \expand{\W^i}\).

  \item It remains to check that we can guarantee to \emph{only} reach
    vertices in \( V'\), while avoiding any vertex \(v''\) with
    \(v'' \notin V'\).  That is, we want to reach
    $\opponents{v,a,V'}$, as defined in~\Cref{nablaI}
    page~\pageref{nablaI}.

  \item We take the intersection of
    \(\bigcap_{v' \in V'} K_{v'}\) and \(\opponents{v,a,V'}\), to obtain
    the number of opponents for which we know that we have a winning
    strategy from \(v\).
  \item We repeat this procedure for every possible combination of sets
    \((v', K_{v'}) \in \W^i\), by varying the witnesses
    ${(K_{v'})}_{v' \in V}$.
    Let us emphasize that we do not take the union over the
    possible \(K_{v'}\). Instead, we construct a set containing all
    the sets yielded by the above procedure.
\end{itemize}

\begin{definition}\label{def:symbolic-alt}
  For $(v, a) \in \vertices \times \alphabet$ and $S \subseteq \verticesE$, let
  \(\KPredAlt{v,a}{S} \) be the operator on \(\subsets{\natnonneg}\),
  and let \(\PredAlt{S}\) be the operator on \(\subsets{\verticesE}\)
  defined as
  \begin{gather*}
    \KPredAlt{v,a}{S} = \left\{ \bigcup_{V' \subseteq V}
      \opponents{v,a,V'} \cap \bigcap_{v' \in V'} K_{v'} \ \middle\vert\
    \forall v' : (v', K_{v'}) \in S \right\}
    \\
    \PredAlt{S} = \bigcup_{v\in \vertices} \bigcup_{a\in
    \En(v)}\{(v,K) \mid K \in \KPredAlt{v,a}{S}\} \enspace.
  \end{gather*}
  Finally, the sets \({(\WAlt^i)}_{i \in \nat} \subseteq \verticesE\)
  are defined inductively:
  \begin{align*}
    {\displaystyle \WAlt^0} &{\displaystyle = \{(\target, \natnonneg)\} \cup
    \{(v, \emptyset) \mid v \in \vertices \setminus \{\target\}\}}
    \\
    {\displaystyle \forall i \geq 0 : \WAlt^{i+1}} &{\displaystyle =
      \WAlt^i \lub
    \PredAlt{\WAlt^i}.}
  \end{align*}
\end{definition}

For this operator, one can directly show termination.

\begin{restatable}{lemma}{predAltTermination}\label{lemma:PredAlt:termination}
  The sequence \({(\WAlt^i)}_{i \in \natnonneg}\) stabilizes.
\end{restatable}
\begin{proof}
  Let \(\latticeK\) be the minimal set such that
  \begin{itemize}
    \item \(\natnonneg \in \latticeK\), \(\emptyset \in \latticeK\),
    \item for every \(v, v' \in \vertices\) and \(a \in \alphabet\),
      \(\opponents{v, a, v'} \in \latticeK\), and
    \item for every \(K, K' \in \latticeK\), one has
      \(K \cup K' \in \latticeK\), \(K \cap K' \in \latticeK\), and
      \(K \setminus K' \in \latticeK\),
  \end{itemize}
  and let \(\verticesEK = \{(v, K) \in \verticesE \mid K \in \latticeK\}\).
  One can show that \(\latticeK\) and \(\verticesEK\) are finite
  (using algebraic arguments and the fact that we have
  finitely many \enquote{atoms}).

  Whenever $S$ is a subset of $\verticesEK$, then
  $\KPredAlt{v,a}{S}$ belongs to $\latticeK$ and $\PredAlt{S}$ is included
  in $\verticesEK$. $\verticesEK$ is thus stable by the operator. Since
  $\WAlt^0 \subseteq \verticesEK$, the sets $\WAlt^i$ all belong to
  $\verticesEK$.  As $\latticeK$ is finite, the fixed-point
  computation of the sequence ${(\WAlt^i)}_{i \in \nat}$ will thus
  converge in finitely many steps.
\end{proof}

We can also show a monotonicity property with respect to
the preorder.

\begin{restatable}{lemma}{predAltMonotonicity}\label{lemma:PredAlt}
  For any $(v,a) \in \vertices \times \alphabet$ and two sets
  $S,S' \subseteq \verticesE$ such that $S \parOrdersim S'$,
  \begin{itemize}
    \item
      \(\KPredAlt{v,a}{S} \subseteq \expand{(\KPredAlt{v,a}{S'})}\); and
    \item
      \(\PredAlt{S} \parOrdersim \PredAlt{S'}\).
  \end{itemize}
\end{restatable}

\begin{example}\label{ex:WAlt}
  We illustrate this alternative approach on the game
  \(\game = (\arena, t)\) of
  \Cref{fig:arena:nondeterministic}. Initially
  $\WAlt^0 = \{(s, \emptyset), (v, \emptyset), (x_1, \emptyset), (x_2,
  \emptyset), (y_1, \emptyset), (y_2, \emptyset), (t, \natnonneg)\}$.

  We focus on computing \(\KPredAlt{v, c}{\WAlt^0}\).
  Given \(\WAlt^0\), there is a unique choice
  for \(K_{v'}\) for each \(v' \in \vertices\). Now, since the only
  successors of $v$ by a $c$-edge are $s$ and
  $t$, for every \(V' \subseteq \vertices\),
  if \(\opponents{v, c, V'}\) is not empty, then it must be that
  $V' \subseteq \{s,t\}$.
  When \(s \in V'\), we immediately conclude that
  \(\bigcap_{v' \in V'} K_{v'} = \emptyset\).  It thus remains to
  treat \(V' = \{t\}\), which yields
  \(\opponents{v, c, t} \cap K_t = \natsubs{>2}\).  Hence,
  \(\KPredAlt{v, c}{\WAlt^0} = \{\emptyset \cup \natsubs{>2}\} =
  \{\natsubs{>2}\}\).
  Analogously, one can compute
  \(\KPredAlt{y_1, b}{\WAlt^0} = \KPredAlt{y_2, b}{\WAlt^0} =
  \{\natnonneg\}\). In the end,
  \[
    \WAlt^1 = \{(s, \emptyset), (v, \natsubs{>2}), (x_1, \emptyset),
      (x_2, \emptyset),
    (y_1, \natnonneg), (y_2, \natnonneg), (t, \natnonneg)\}.
  \]
  Using similar steps, we obtain that
  \(\KPredAlt{x_1, b}{\WAlt^1} = \{\natsubs{\neq 2}\}\) and
  \(\KPredAlt{x_2, b}{\WAlt^1} = \{\natsubs{\neq 1}\}\).
  Thus,
  \[
    \WAlt^2 = \{(s, \emptyset), (v, \natsubs{>2}), (x_1, \natsubs{\neq
      2}), (x_2, \natsubs{\neq 1}), (y_1, \natnonneg), (y_2,
    \natnonneg), (t, \natnonneg)\}.
  \]
  We now
  observe that $x_1$ and $x_2$ are the only targets of an edge labeled $a$
  from $v$. Therefore, to
  compute
  \(\KPredAlt{v, a}{\WAlt^2}\), we focus on  \(\{x_1\}, \{x_2\}\),
  and \(\{x_1, x_2\}\).
  Due to the non-determinism,
  \(\opponents{v, a, \{x_1\}} = \opponents{v, a, x_1}
    \setminus \bigcup_{v'
  \neq x_1} \opponents{v, a, v'} = \emptyset\).  Likewise, \(\opponents{v,
  a, \{x_2\}} = \emptyset\).
  Finally, \(\opponents{v, a, \{x_1, x_2\}} = \natnonneg\), but
  \(\bigcap_{v' \in \{x_1,x_2\}} K_{v'} = \{1\} \cap \{2\} = \emptyset\).
  Thus, \(\KPredAlt{v, a}{\WAlt^2} = \{\emptyset\}\) and
  \(\WAlt^3 = \WAlt^2\).

  Observe that, for every index \(i\), \(\WAlt^i = \W^i\) (with
  \(\W^i\) computed in \Cref{ex:WK}). This is not a coincidence, and
  we prove it in the next section.
\end{example}

\subsection{Two fixed-point
computations?}\label{subsec:twofixedpointcomputations}
As we shall see, the two previous approaches compute the
same sets.

\begin{restatable}{theorem}{fixpointequality}\label{thm:same-fixed-point}
  For all \(i \in \nat\), \(\W^i = \WAlt^i\). \\ The sequences
  ${(\W^i)}_{i \in \nat}$ and ${(\WAlt^i)}_{i \in \nat}$ stabilize, and
  \( \W^\infty = \WAlt^\infty\).
\end{restatable}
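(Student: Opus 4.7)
The plan is to prove the three sequences agree by induction on $i$, then derive stabilization from finiteness of $\latticeKA$. The base case $i=0$ is immediate: all three definitions start from the antichain $\{(\target,\natnonneg)\} \cup \{(v,\emptyset) \mid v \neq \target\}$, which lies in $\vertices_E^{\latticeKA}$ since $\emptyset, \natnonneg \in \latticeKA$. For the inductive step, assuming $\W^i = \WAlt^i = \WK^i$, I need only show the three predecessor-based updates produce the same antichain. Since $L_1 \lub L_2 = \reduce{L_1} \lub \reduce{L_2}$ (as recalled after the definition of join), a join depends on its operands only through their reduced forms, so it suffices to prove
\[
  \reduce{\Pred{\expand{\W^i}}} \,=\, \reduce{\PredAlt{\WAlt^i}} \,=\, \reduce{\PredK{\expandK{\WK^i}}}.
\]
Because $\parOrder$ only compares pairs sharing the same vertex, reduction commutes with the union over $(v,a)$, so everything reduces to a single vertex-and-action lemma.

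That lemma is the following: for every $v \in \vertices$, every $a \in \En(v)$, every $\arena$-coherent $\latticeK$, and every $S \subseteq \vertices_E^{\latticeK}$,
\[
  \reduce{\KPredK{v,a}{\expandK{S}}} \,=\, \reduce{\KPredAlt{v,a}{S}}\,,
\]
and likewise for $\latticeK = \subsets{\natnonneg}$. I would prove the two $\parOrdersim$-inclusions and invoke \Cref{lemma:reduce}. For $\KPredAlt{v,a}{S} \subseteq \KPredK{v,a}{\expandK{S}}$, fix witnesses $(v', K_{v'}) \in S$ defining $K' = \bigcup_{V' \subseteq V} \opponents{v,a,V'} \cap \bigcap_{v' \in V'} K_{v'}$; the key observation is that $\opponents{v,a,V'} \cap \opponents{v,a,v'}$ equals $\opponents{v,a,V'}$ when $v' \in V'$ and $\emptyset$ otherwise, a direct consequence of~\eqref{nablaI}. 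This gives $K' \cap \opponents{v,a,v'} \subseteq K_{v'}$, and since $K' \cap \opponents{v,a,v'} \in \latticeK$, the pair $(v', K' \cap \opponents{v,a,v'})$ lies in $\expandK{S}$. In the other direction, for $K \in \KPredK{v,a}{\expandK{S}}$, pick for each $v'$ a witness $K_{v'}$ with $(v', K_{v'}) \in S$ and $K \cap \opponents{v,a,v'} \subseteq K_{v'}$; decomposing $K$ via equation~\eqref{K:complete} and using that any $n \in K \cap \opponents{v,a,V'}$ belongs to $K \cap \opponents{v,a,v'} \subseteq K_{v'}$ for every $v' \in V'$ gives $K \subseteq \bigcup_{V' \subseteq V} \opponents{v,a,V'} \cap \bigcap_{v' \in V'} K_{v'} \in \KPredAlt{v,a}{S}$. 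I expect this double manipulation juggling~\eqref{nablaI} and~\eqref{K:complete} to be the main obstacle.

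Having established the key lemma, the induction proves $\W^i = \WAlt^i = \WK^i$ for all $i$, and it remains to argue convergence. Inspecting \Cref{def:symbolic-alt}, each element of $\KPredAlt{v,a}{S}$ is obtained via finite unions and intersections from the sets $\opponents{v,a,V'}$ and from the knowledge sets appearing in $S$; in particular, if every such knowledge set lies in $\latticeKA$ then so does each element of $\KPredAlt{v,a}{S}$, because $\latticeKA$ is $\arena$-coherent and therefore contains every $\opponents{v,a,V'}$ and is closed under unions and intersections. Since $\WAlt^0 \subseteq \vertices_E^{\latticeKA}$, an easy induction gives $\WAlt^i \subseteq \vertices_E^{\latticeKA}$ for all $i$. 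As $\latticeKA$ is finite, so is the lattice of antichains of $\vertices_E^{\latticeKA}$, hence the non-decreasing sequence of downward-closures stabilizes; this transfers to $\WAlt^i$ and, by the inductive equality, to $\W^i$ and $\WK^i$, yielding $\W^\infty = \WAlt^\infty = \WK^\infty$.
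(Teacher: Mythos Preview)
Your proof is correct and follows essentially the same path as the paper: the same two inclusions relating $\KPredK{\cdot,\cdot}{\cdot}$ and $\KPredAlt{\cdot,\cdot}{\cdot}$ via equations~\eqref{nablaI} and~\eqref{K:complete}, the same induction on $i$, and the same convergence argument through finiteness of $\latticeKA$.

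There is one organizational difference worth noting. The paper establishes the two \emph{set} inclusions $\KPredK{v,a}{S} \subseteq \KPredAlt{v,a}{S}$ and $\KPredAlt{v,a}{S} \subseteq \KPredK{v,a}{\expandK{S}}$, then needs a separate monotonicity lemma (\Cref{lemma:PredAlt} and its \Cref{coro:PredAlt}) to bridge $\PredAlt{\WAlt^i}$ and $\PredAlt{\expandK{\WAlt^i}}$ in the inductive step. You instead prove the \emph{simulation} inclusion $\KPredK{v,a}{\expandK{S}} \parOrdersim \KPredAlt{v,a}{S}$ directly: starting from $K \in \KPredK{v,a}{\expandK{S}}$, you pick witnesses $(v',K_{v'}) \in S$ dominating $K \cap \opponents{v,a,v'}$ and land on an element of $\KPredAlt{v,a}{S}$ that dominates $K$. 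This folds the monotonicity argument into the key lemma and makes \Cref{lemma:PredAlt} and \Cref{coro:PredAlt} unnecessary, at the cost of proving a slightly weaker (simulation rather than containment) statement. Both routes rely on the same manipulation of~\eqref{nablaI} and~\eqref{K:complete}; yours is marginally more streamlined.
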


\begin{proof}[Sketch]
  We give here the main ingredients of the proof, starting with a strong
  relationship between
  $\KPred{\cdot,\cdot}{\cdot}$ and $\KPredAlt{\cdot,\cdot}{\cdot}$:
  for every \(v \in \vertices\), \(a \in \En(v)\) and
  $S \subseteq \verticesE$, we have
  \(\KPred{v, a}{S} \subseteq \KPredAlt{v, a}{S}\)
  and
  \(\KPredAlt{v,a}{S} \subseteq \KPred{v,a}{\expand{S}}\).

  Notice that the second inclusion requires taking the downward
  closure of $S$. Consider for instance an edge
  $v \xrightarrow{a,[2,5]} t$ with the assumption that
  $(t,\natnonneg) \in S$. Then $K_t = \natnonneg$ and
  $\opponents{v,a,t} \cap K_t = [2,5]$, whereas $(t,[2,5]) \notin
  S$. This also justifies the downward closure in the definition of
  ${(\W^i)}_{i \in \nat}$.

  If \(S\) is downward-closed, we derive
  \(\KPred{v, a}{S} = \KPredAlt{v, a}{S}\).
  From there, it should be clear that
  \(\Pred{S} = \PredAlt{S}\).

  Clearly, $\WAlt^0 = \W^0$.  Unfortunately, \(\W^0\)
  is not downward-closed: it contains
  $(t,\natnonneg)$ and none of its dominated elements. However,
  we still manage to prove by induction on $i$ that for every
  $i\in \nat$, $\WAlt^i = \W^i$:
  \begin{align*}
    \WAlt^{i+1} &= \WAlt^i \lub \PredAlt{\WAlt^i}
    = \W^i \lub \PredAlt{\expand{\WAlt^i}}
    =\W^i \lub \Pred{\expand{\W^i}}
    = \W^{i+1}.
  \end{align*}

  By \Cref{lemma:PredAlt:termination}, the sequence
  \({(\WAlt^i)}_{i \in \natnonneg}\) stabilizes, and so does
  \({(\W^i)}_{i \in \natnonneg}\) thanks to the equality.
\end{proof}

Combining Theorems~\ref{thm:symbolic:strategy:region}
and~\ref{thm:same-fixed-point} we obtain our final result:
\begin{theorem}\label{thm:main}
  The antichains $\W^\infty$ and $\WAlt^\infty$ describe
  the maximal elements of the winning region for Eve in $\kGame$.
\end{theorem}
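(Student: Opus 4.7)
The plan is to derive \Cref{thm:main} as a direct consequence of \Cref{thm:symbolic:strategy:region} and \Cref{thm:same-fixed-point}, together with the elementary properties of antichains from \Cref{subsec:antichains}. First, \Cref{thm:same-fixed-point} ensures that ${(\W^i)}_{i \in \nat}$, ${(\WAlt^i)}_{i \in \nat}$ and ${(\WK^i)}_{i \in \nat}$ all stabilize to a common limit, so it suffices to prove the statement for $\W^\infty$; the claims for $\WAlt^\infty$ and $\WK^\infty$ will then follow at once.

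The central step identifies $\expand{\W^\infty}$ with Eve's winning region in $\kGame$. By \Cref{thm:symbolic:strategy:region}, a vertex $(v, K) \in \verticesE$ with $K \neq \emptyset$ is winning if and only if $(v, K) \in \expand{\W^i}$ for some $i$, which by stabilization of the sequence is equivalent to $(v, K) \in \expand{\W^\infty}$. For the edge case $K = \emptyset$, each pair $(v, \emptyset)$ sits in $\W^0$ and is $\parOrder$-dominated by any $(v, K')$ with $K' \neq \emptyset$, so it survives in the final antichain only when no non-empty winning knowledge exists at $v$; this matches the vacuous-win convention (from $(v, \emptyset)$ the knowledge game cannot proceed) and does not introduce any spurious maxima.

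To conclude, I would show that $\W^\infty$ coincides with the $\parOrder$-maximal elements of $\expand{\W^\infty}$. Every $\W^i$ is an antichain by construction, since the join $\lub$ applies $\reduce{\cdot}$ at each iteration; hence $\W^\infty$ is itself an antichain. Combining the identity $\reduce{\expand{L}} = \reduce{L}$ recalled in \Cref{subsec:antichains} with the observation that $\reduce{L} = L$ as soon as $L$ is already an antichain, we obtain $\reduce{\expand{\W^\infty}} = \W^\infty$, so $\W^\infty$ enumerates the maximal elements of the winning region, as required.

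The proof is essentially a bookkeeping exercise once \Cref{thm:symbolic:strategy:region} and \Cref{thm:same-fixed-point} are in hand; the only mild subtlety is ensuring that the initial elements $(v, \emptyset)$ do not contaminate the final antichain with non-maximal elements, which is exactly what is checked in the second paragraph.
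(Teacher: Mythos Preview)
Your proposal is correct and follows essentially the same route as the paper: combine \Cref{thm:same-fixed-point} (all three sequences stabilize to the same antichain) with \Cref{thm:symbolic:strategy:region} (which identifies $\bigcup_i \expand{\W^i}$ with the winning region). The paper's proof is a terse two-sentence appeal to these two results; you expand the bookkeeping explicitly, in particular the identity $\reduce{\expand{\W^\infty}} = \W^\infty$ and the corner case $K = \emptyset$, which the paper leaves implicit.
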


\section{Experimental evaluation}\label{sec:experiments}

We implemented our new algorithms \(\WAlt\) and \(\W\) based on
antichains, and the DFS algorithm of~\cite{BertrandBM19} in an
open-source C++ tool called
\textsf{ParaGraphs}\footnote{\url{https://gitlab.inria.fr/x-staquet/paragraphs}}.
It currently supports constraints given as finite unions of intervals,
but is designed in a modular way and makes it easy to define new
types (for instance semi-linear sets), as well as specific
implementations of the algorithms for these types.

\paragraph{Implementation of \(\W\).}

Recall that \(\KPred{\cdot, \cdot}{\cdot}\) requires to test every subset of
\(\natnonneg\), \ie, infinitely many sets.
However, as introduced in the proof of \Cref{lemma:PredAlt:termination}, any
set produced by \(\KPredAlt{\cdot, \cdot}{\cdot}\) belongs to a finite
subset \(\latticeK\) of \(\subsets{\natnonneg}\), which is a lattice, that
can be built from the constraints of the arena.
Moreover, by \Cref{thm:main}, for any \(i\), \(\W^i = \WAlt^i\).
Thus, it suffices to only consider sets that belong to \(\latticeK\).
That is, we can test finitely many sets.

In \textsf{ParaGraphs}, \(\latticeK\) is constructed from the constraints
appearing on the arena such that we can iterate over the elements, starting from
\(\natnonneg\), in an efficient way. This renders the implementation of
\(\W\) straightforward: to compute \(\KPred{v, a}{S}\), we iterate over
\(\latticeK\) and test whether the current element \(K\) is such that
\((v', K \cap \opponents{v, a, v'}) \notin S\), in which case we consider the
children of \(K\) in the lattice. Otherwise, \(K\) is a maximal element of the
set to return, and it is useless to explore the sublattice rooted at \(K\).
Furthermore, we keep track of which nodes can be ignored (because
they are already covered by a maximal element), in order to
avoid useless computations. Finally, we highlight that there is no
need to compute \(\expand{\W^i}\) to obtain \(\W^{i+1}\).
Indeed, we are only interested in knowing whether
\((v', K \cap \opponents{v, a, v'}) \in S\), which can be performed
directly on the maximal elements of the antichain.

\paragraph{Implementation of \(\WAlt\).}

Given its definition, the implementation of \(\WAlt\) is even more
direct, although heavily
optimized.
For instance, an antichain for \(\WAlt\) is represented as a
dynamically sized array;
to reduce time spent allocating and freeing it, the same array is reused
at each call of \(\KPredAlt{\cdot, \cdot}{\cdot}\).

\paragraph{Implementation of the DFS algorithm.}

The DFS algorithm in~\cite{BertrandBM19} does not compute the
whole winning region, but determines whether Eve has a winning
strategy in \(\kGame\) from a given initial vertex
\((v_0, \natnonneg)\).  It performs a decomposition of the knowledge
game into sub-games, each of which is defined for a pair \((v, K)\)
and is a restriction of \(\kGame\) to vertices \((v', K, a)\) and
\((v', K')\) that are reachable from \((v, K)\) via vertices with the
same knowledge set \(K\) only.  That is, except in the \enquote{exit
vertices}, the knowledge set is not refined within a sub-game.
Each sub-game can be solved in polynomial time in the
size of \(\game\). The algorithm tags each node of an exponential-size
tree \(\tree\) of sub-games with \Win or \Lose in a DFS fashion. This
DFS exploration is implemented in
\textsf{ParaGraphs}, together with the resolution of each sub-game by
standard (optimized) attractor computation.

\paragraph{Research statements}

Our experimental evaluation aims to address the following research
questions:
\begin{description}
  \item[Q1]
    Between \(\W\) and \(\WAlt\), which approach is the most efficient?
  \item[Q2]
    Are there arenas on which the antichain-based methods are more
    efficient than the DFS algorithm?
\end{description}

\subsection{Experimental results}

The experiments were carried out on two types of parameterized arenas:
on the one hand, scalable games derived from an illustrative example
in~\cite{BertrandBM19}, and on the other hand games obtained from
quantified Boolean formulas (QBF). Benchmarks were executed on a Dell
Precision 7780, running Debian 13.  \textsf{ParaGraphs} was compiled
with g++ 14.2.0.

\paragraph{Synthetic arenas.} To exercise
our tool \textsf{ParaGraphs}, we considered four families of scalable
instances, with the objective to cover a wide range of
cases. Each family has a parametric size: the parameter
is $n$, and the arena size (precisely the number of vertices and of
different constraints) is linear in $n$; some are deterministic
(D), some non-deterministic (ND), some are winning for Eve (W), some
are losing for Eve (NW). They are given in \Cref{app:experiments}.

\Cref{tab:experiments:solve:games} compares the running time of the three
algorithms, with and
without the time needed to construct the lattice for \(\W\).
To be more specific, the instances with
\(n = 11\) (resp.\ \(n = 12\)) generate a lattice consisting of 4096
(8192) sets that is constructed in about 15s (70s).
Computing larger
lattices exceeds our timeout of five
minutes.  We again highlight that the algorithm from~\cite{BertrandBM19}
solely determines whether Eve has a winning strategy, while our new
symbolic algorithms compute the whole winning region, thus returning a
more complete information.

\begin{table}[t]
  \caption{Time and memory taken by the three algorithms.  Timeout is
    set to five
    minutes. For \(\W\), we list time with and without the construction of the
    lattice, which counts towards the time limit.  The best time and
    memory consumption of each
  row is highlighted in bold.}\label{tab:experiments:solve:games}
  \centering
  \renewcommand{\multirowsetup}{\raggedleft}
  \renewcommand{\arraystretch}{0.8}
  \footnotesize
  \begin{tabular}{@{}
      l@{}r
      @{\hspace{10pt}}r@{\hspace{10pt}}r
      @{\hspace{10pt}}r
      @{\hspace{10pt}}r
      @{\hspace{10pt}}r@{\hspace{10pt}}r@{\hspace{10pt}}r
    @{}}
    \toprule
    \multicolumn{2}{c}{Game} &
    \multicolumn{4}{c}{Time (ms)} &
    \multicolumn{3}{c}{Memory (kB)}
    \\
    \cmidrule(r){1-2}
    \cmidrule(rl){3-6}
    \cmidrule(l){7-9}
    Name & Value of \(n\) &
    \multicolumn{2}{c}{\(\W\)} &
    \multirow{2}*{\(\WAlt\)} &
    \multirow{2}*{DFS} &
    \multirow{2}*{\(\W\)} &
    \multirow{2}*{\(\WAlt\)} &
    \multirow{2}*{DFS}
    \\
    \cmidrule(rl){3-4}
    & & Only & Total
    \\
    \midrule
    \multirow[c]{6}{*}{D-NW-1} & 10 & 850 & 3784 &
    \bfseries 67 & 108704 &
    7005.60 & 5067.20 & \bfseries 4642.40 \\
    & 11 & 3581 & 17312 & \bfseries 246 & timeout &
    10365.60 & \bfseries 5744.80 & timeout \\
    & 12 & 16017 & 82165 & \bfseries 961 & timeout &
    17808.80 & \bfseries 7768.00 & timeout \\
    & 13 & timeout & timeout & \bfseries 3817 & timeout &
    timeout & \bfseries 11936.80 & timeout \\
    & 14 & timeout & timeout & \bfseries 15269 & timeout &
    timeout & \bfseries 20584.00 & timeout \\
    & 15 & timeout & timeout & \bfseries 66657 & timeout &
    timeout & \bfseries 40322.40 & timeout \\
    \midrule
    \multirow[c]{6}{*}{D-W-1} & 10 & 608 & 3581 & 68 &
    \bfseries <1 &
    7491.60 & 5102.40 & \bfseries 4735.60 \\
    & 11 & 2559 & 16696 & 250 & \bfseries <1 &
    11694.40 & 6116.40 & \bfseries 4686.80 \\
    & 12 & 11493 & 80720 & 975 & \bfseries <1 &
    19790.80 & 8058.00 & \bfseries 4708.00 \\
    & 13 & timeout & timeout & 3707 & \bfseries <1 &
    timeout & 12733.60 & \bfseries 4674.00 \\
    & 14 & timeout & timeout & 15105 & \bfseries <1 &
    timeout & 22339.60 & \bfseries 4738.80 \\
    & 15 & timeout & timeout & 66224 & \bfseries <1 &
    timeout & 43618.40 & \bfseries 4696.40 \\
    \midrule
    \multirow[c]{6}{*}{D-NW-2} & 10 & 211 & 3345 & 30 &
    \bfseries 2 &
    6391.20 & 4568.80 & \bfseries 4538.40 \\
    & 11 & 747 & 15574 & 85 & \bfseries 2  &
    8881.60 & 4804.80 & \bfseries 4716.00\\
    & 12 & 2720 & 74282 & 276 & \bfseries 3  &
    13774.00 & 5253.20 & \bfseries 4695.60\\
    & 13 & timeout & timeout & 945 & \bfseries 30 &
    timeout & 6324.00 & \bfseries 4541.60 \\
    & 14 & timeout & timeout & 3571 & \bfseries 41 &
    timeout & 8076.40 & \bfseries 4614.80 \\
    & 15 & timeout & timeout & 12425 & \bfseries 54 &
    timeout & 12889.60 & \bfseries 4616.00 \\
    \midrule
    \multirow[c]{6}{*}{D-W-2} & 10 & 473 & 3569 & 104 &
    \bfseries <1 &
    6517.20 & 4803.60 & \bfseries 4540.00 \\
    & 11 & 1764 & 16680 & 323 & \bfseries <1 &
    9087.60 & 5216.80 & \bfseries 4725.20 \\
    & 12 & 6668 & 78194 & 1065 & \bfseries 2 &
    14324.00 & 5982.40 & \bfseries 4673.20 \\
    & 13 & timeout & timeout & 3903 & \bfseries 13 &
    timeout & 7767.20 & \bfseries 4728.40 \\
    & 14 & timeout & timeout & 14157 & \bfseries 12 &
    timeout & 11820.40 & \bfseries 4718.80 \\
    & 15 & timeout & timeout & 53060 & \bfseries 13 &
    timeout & 19710.80 & \bfseries 4651.60 \\
    \midrule
    \multirow[c]{6}{*}{ND-NW} & 10 & 9 & 2935 &
    \bfseries <1 & 102195 &
    6830.00 & 4666.40 & \bfseries 4400.00 \\
    & 11 & 23 & 14230 & \bfseries 2 & timeout &
    10323.60 & \bfseries 4777.60 & timeout \\
    & 12 & 56 & 67943 & \bfseries 6 & timeout &
    16984.40 & \bfseries 5126.80 & timeout \\
    & 13 & timeout & timeout & \bfseries 9 & timeout &
    timeout & \bfseries 5778.00 & timeout \\
    & 14 & timeout & timeout & \bfseries 20 & timeout &
    timeout & \bfseries 7258.40 & timeout \\
    & 15 & timeout & timeout & \bfseries 40 & timeout &
    timeout & \bfseries 10236.80 & timeout \\
    \bottomrule
  \end{tabular}
\end{table}

Based on these results, we can answer our two research
questions for synthetic arenas. More specifically for {\bf Q1},
\(\WAlt\) is more efficient than \(\W\), as
observed by the number of timeouts. This difference majorly comes from
the construction of \(\latticeK\): computing the lattice in advance
already costs more time than the iterations of \(\WAlt\).

Still for synthetic arenas, the answer to {\bf Q2} is
\enquote{yes}, as, on D-NW-1, \(\WAlt\) greatly outperforms the DFS
algorithm of~\cite{BertrandBM19}.  Interestingly, the latter vastly
outperforms the former when Eve has a winning strategy (D-W-1 and
D-W-2) and for D-NW-2. For the two winning cases, this is simply due
to the fact that the tree exploration is much quicker:
since there is a winning strategy, we explore less sub-games than
compared to D-NW-1 and D-NW-2.  For instance, for \(n = 10\), 31
sub-games are considered for D-W-1, and 19728201 for D-NW-1. Moreover,
the structure of D-NW-2 already induces a small tree: only 135
sub-games are needed (and 34 for D-W-2). Both \(\W\) and \(\WAlt\)
need 22 iterations to compute the whole winning
region.\footnote{Recall that, for every \(i\), \(\WAlt^i = \W^i\), \ie, it is
expected that both algorithms need the same number of iterations.}
Finally, notice that \(\WAlt\) is the only algorithm that managed to
finish all instances. Thus, it seems to scale well to larger and more
complex arenas.

We measured the memory consumption of the algorithms,
which is also reported in \Cref{tab:experiments:solve:games}.
We observe that D-W-1 is
the arena that induces the highest memory consumption for every algorithm.
The fact that the antichain-based methods require more
space than the DFS algorithm is not surprising. Indeed, one may
need to store exponentially many maximal elements inside an
antichain, \ie, our new algorithms may require exponential space in
the worst case.

\begin{figure}[t]
  \centering
  \begin{tikzpicture}[
      /pgfplots/table/x expr={(\thisrow{Variables}==7 &&
      \thisrow{Clauses}>=60)?\thisrow{Clauses}:nan},
    ]
    \begin{axis}[
        footnotesize,
        width = 1\textwidth,
        height = 90pt,
        axis x line = bottom,
        axis y line = left,
        xlabel = {Number of clauses},
        xlabel near ticks,
        ylabel = {Time (ms)},
        ylabel near ticks,
        ylabel shift = {-5pt},
        ymajorgrids = true,
        xtick = data,
        enlarge y limits = 0.1,
        enlarge x limits = 0.1,
        nodes near coords={\pgfmathprintnumber[1000
        sep={\,}]\pgfplotspointmeta},
      ]
      \addplot+ [
        Red,
        only marks,
        mark=*,
      ]
      table [
        y expr = \thisrow{W time total (ms)},
        col sep = comma,
      ]
      {qbf_solving.csv}
      ;
      \addplot+ [
        Blue,
        only marks,
        mark=triangle*,
      ]
      table [
        y expr = \thisrow{WAlt time (ms)},
        col sep = comma,
      ]
      {qbf_solving.csv}
      ;
      \addplot+ [
        Indigo,
        only marks,
        mark=square*,
      ]
      table [
        y expr = \thisrow{Explicit time (ms)},
        col sep = comma,
      ]
      {qbf_solving.csv}
      ;
    \end{axis}
  \end{tikzpicture}
  \caption{Time taken by \(\W\), \(\WAlt\) on randomly generated QBF
    instances with seven variables. For each number of
    clauses, we generate 50 instances, and we compute the average,
    ignoring the runs that timed out. The timeout was set to five
    minutes. On those instances, the DFS
    algorithm from~\cite{BertrandBM19} systematically reached the
    timeout limit, while \(\W\) and \(\WAlt\)
    always succeed. Red circles are for \(\W\) and blue triangles for
    \(\WAlt\). Values are displayed  next to the
  symbol.}\label{fig:experiments:solve:qbf}

  \begin{tikzpicture}[
      /pgfplots/table/x expr={(\thisrow{Variables}==7 &&
      \thisrow{Clauses}>=60)?\thisrow{Clauses}:nan},
    ]
    \begin{axis}[
        footnotesize,
        width = 1\textwidth,
        height = 85pt,
        axis x line = bottom,
        axis y line = left,
        xlabel = {Number of clauses},
        xlabel near ticks,
        ylabel = {Memory (MB)},
        ylabel near ticks,
        ylabel shift = {-5pt},
        ymajorgrids = true,
        xtick = data,
        enlarge y limits = 0.1,
        enlarge x limits = 0.1,
        nodes near coords={\pgfmathprintnumber[1000
        sep={\,}]\pgfplotspointmeta},
      ]
      \addplot+ [
        Red,
        only marks,
        mark=*,
      ]
      table [
        y expr = \thisrow{W memory (kB)} / 1000,
        col sep = comma,
      ]
      {qbf_memory.csv}
      ;
      \addplot+ [
        Blue,
        only marks,
        mark=triangle*,
      ]
      table [
        y expr = \thisrow{WAlt memory (kB)} / 1000,
        col sep = comma,
      ]
      {qbf_memory.csv}
      ;
      \addplot+ [
        Indigo,
        only marks,
        mark=square*,
      ]
      table [
        y expr = \thisrow{Explicit memory (kB)} / 1000,
        col sep = comma,
      ]
      {qbf_memory.csv}
      ;
    \end{axis}
  \end{tikzpicture}
  \caption{Memory taken by \(\W\), \(\WAlt\), and the DFS algorithm on
    randomly generated QBF instances, with seven variables. For each number of
    clauses, we generate 50 instances, and we compute the average,
    ignoring the runs that timed out. All executions of the DFS
    algorithm on those instances timed
  out.}\label{fig:experiments:solve:qbf:memory}
\end{figure}

\paragraph{Arenas obtained from quantified Boolean formulas.} Our
tool \textsf{ParaGraphs} also handles quantified Boolean formulas
(QBF) and checks their satisfiability by reduction to solving a
concurrent parameterized reachability game, as in the \PSPACE-hardness
proof~\cite[Prop.\ 13]{BertrandBM19}.

Interestingly, given the constraints in arenas generated from QBF
instances, the constructing the lattice \(\latticeK\) (whose size only
depends on the number of variables) can be made very efficient: a
clever order on insertions  always avoids costly equality checks
(with elements already in the lattice). For formulas with 6 (resp.\ 7)
variables, the lattice, whose size is 8192 (resp.\ 32768), is obtained
in about 575ms (resp.\ 9s). Observe that our tailored construction
permits us to build larger lattices. We exercised the implementation
of the three algorithms on bigger arenas than the previous ones by
randomly generating 50 formulas with \(n\) variables and \(m\)
clauses, for every \(n \in \{4,5,6,7\}\) and
\(m \in \{50,60,\dotsc,100\}\). We focus here on the instances with 7
variables and defer to the appendix the other instances.

The answers to {\bf Q1} and {\bf Q2} on benchmarks
generated from QBF formulas are similar to the ones on synthetic
arenas.  First, both \(\W\) and \(\WAlt\) manage to finish all
instances, while the DFS algorithm systematically reached
the time limit.  Second, \Cref{fig:experiments:solve:qbf} gives the
time taken by the \(\W\) and \(\WAlt\) algorithms on the formulas with
7 variables and with 60 to 100 clauses. For clarity, the exact values
are displayed next to each point. We observe that \(\WAlt\) is the
most efficient algorithm, by a margin of three orders of
magnitude. The memory consumption is given in
\Cref{fig:experiments:solve:qbf:memory}. Similarly to the synthetic
arenas, \(\WAlt\) needs less memory than \(\W\), and the algorithm
from~\cite{BertrandBM19} it the most efficient memory-wise. However,
we do not measure the memory consumption of the instances that timed
out.

\paragraph{Conclusion.}
We can give the following answers to our research
questions:
\begin{description}
  \item [A1]
    \(\WAlt\) is more efficient than \(\W\), both time- and memory-wise.
  \item [A2]
    There are arenas on which \(\WAlt\) outspeeds the DFS algorithm.
    The memory consumption of the latter is always lower than those
    of the former.
    However, our experiments are not sufficient to deduce syntactic
    characteristics
    of arenas on which \(\WAlt\) outperforms than the DFS algorithm.
    Moreover, a potential optimization that would allow a more direct
    comparison between the DFS algorithm and \(\WAlt\) would be to
    stop computing the winning region as soon as \((v_0,
    \natnonneg)\) is encountered.
\end{description}

\section{Conclusion}\label{sec:conclusion}

This paper proposes a symbolic approach to compute the winning region
in parameterized concurrent reachability games. It is based on
antichains to represent sets of knowledges in an effective way. After
providing a fixed-point characterization of the winning region, we
gave two effective algorithms to compute the least fixed-point, one
that iterates over the elements of a lattice (which is finite when the
arena is finite), and one that constructs the knowledge sets on the
fly.  We implemented both methods, as well as the DFS algorithm
from~\cite{BertrandBM19}, in an open-source tool called
\textsf{ParaGraphs}, and performed benchmarks on multiple examples,
including arenas obtained from quantified Boolean formulas.

\bibliographystyle{eptcs} \bibliography{abbreviations,references}

\appendix
\newpage
\section{Details on Section~\ref{sec:symbolic}}
\subsection{Missing proofs of
Subsection~\ref{subsec:antichains}}\label{app:proofs:antichain}

\lemmareduce*

\begin{proof}
  Recall that, for any non-empty \(K \subseteq X\),
  \(K \nsubseteq \Dom(K)\) and \(\reduce{K} \neq \emptyset\), as
  \((X, \parOrder)\) is a complete lattice.

  Clearly, the equivalences hold whenever \(L\) and
  \(L'\) are both empty. If \(L\) is empty and \(L'\) is not, then
  \begin{enumerate}
    \item
      \(\reduce{L} = \reduce{\emptyset} = \emptyset \neq \reduce{L'}\),
    \item
      \(L' \not\parOrdersim \emptyset\), and
    \item
      \(\expand{L'} \nsubseteq \emptyset\) (as \(L' \neq \emptyset\)).
  \end{enumerate}
  Thus, the equivalences also hold in that case and in the symmetric
  case \(L \neq \emptyset\) and \(L' = \emptyset\). In the rest of
  the proof, we thus assume that both \(L\) and \(L'\) are not
  empty.

  First assume (1), that is \(\reduce{L} = \reduce{L'}\). By symmetry,
  it is sufficient to prove $L \parOrdersim L'$. Pick $a \in L$. There
  is $a' \in \reduce{L}$ such that $a \parOrder a'$. Such
  an element \(a'\) necessarily exists, as
  \(\reduce{L} \neq \emptyset\). By assumption,
  $a' \in \reduce{L'}$, hence $a' \in L'$. This proves (2).

  Then assume (2). By symmetry, it is sufficient to prove
  \(L \subseteq \expand{L'}\). Pick $a \in L$. By assumption, there
  exists $a' \in L'$ such that $a \parOrder a'$. It implies
  $a \in \expand{L'}$, which shows (3).

  Then assume (3), that is \(L \subseteq \expand{L'}\) and
  \(L' \subseteq \expand{L}\). By symmetry, it is sufficient to prove
  \(\reduce{L} \subseteq \reduce{L'}\), so pick $a \in
  \reduce{L}$. Since \(\reduce{L} \subseteq L \subseteq \expand{L'}\),
  $a \in \expand{L'}$, hence there is $b \in \reduce{L'}$ such that
  $a \parOrder b$. Since \(\reduce{L'} \subseteq L' \subseteq \expand{L}\),
  $b \in \expand{L}$, hence there is $c \in \reduce{L}$ such that
  $b \parOrder c$. So we have $a \parOrder b \parOrder c$ and
  $a,c \in \reduce{L}$, hence we conclude that $a=c$, therefore $a=b$
  and $a \in \reduce{L'}$. This shows (1).
\end{proof}

\subsection{Missing proofs of Subsection~\ref{subsec:symbolic-algo}}

\fixpointcorrectness*

\begin{proof}
  Let \((v, K) \in \verticesE\) such that \(K \neq \emptyset\).
  We prove both implications separately.

  \proofsubparagraph{\(\implies\).}
  We start by showing that if Eve has a winning strategy \(\kStrat\)
  from \((v, K)\),
  then \((v, K) \in \expand{\W^\infty}\).
  Since \(\kGame\) is a standard two-player turn-based reachability game, we can
  assume that \(\kStrat\) is memoryless.\footnote{The strategy
    \(\kStrat\) is memoryless when, for any \(hv, h'v \in \histories\),
  \(\kStrat(hv) = \kStrat(h'v)\).}

  Let us consider a play \(\rho\) induced by playing \(\kStrat\) from
  \((v, K)\).
  That is, \(\rho\) is of the shape
  \((v_1, K_1) \cdot \kStrat((v_1, K_1)) \cdot
  (v_2, K_2) \cdot \kStrat((v_2, K_2)) \dotsb\)
  such that \((v, K) = (v_1, K_1)\), and for each
  \(i\), \((v_i, K_i) \in \verticesE\).
  Recall that every \(\kStrat(v_i, K_i)\) is an Adam's vertex.
  Since \(\kStrat\) is a winning strategy, there necessarily exists an index
  \(n\) such that \(v_n = \target\), \ie, \((v_n, K_n)\) is a target vertex in
  \(\kGame\).
  We take the history obtained by stopping the play at \((v_n, K_n)\) and
  project it on Eve's vertices (\ie, we drop Adam's vertices in the history).
  We do the same for every plays compatible with \(\kStrat\) and obtain a
  tree \(\tree\).

  Observe that any branch of \(\tree\) is necessarily finite, by construction.
  Let \((v', K')\) be a vertex appearing in \(\tree\).
  Since \(\kStrat\) is a memoryless strategy, \(\kStrat((v', K'))\) always
  picks the same Adam's vertex, yielding the same sub-histories.
  Hence, all subtrees rooted at \((v', K')\) are the same.
  Furthermore, \((v', K')\) cannot appear twice in a branch, as, otherwise,
  Adam has a way to force Eve in the loop, yielding an infinite branch.

  We show that if \((v', K')\) appears at height \(i\) in the tree, then
  \((v', K') \in \expand{\W^i}\).
  Since \((v, K)\) is the root of the tree whose height is, say, \(m\), it
  follows that \((v, K) \in \expand{\W^m}\).
  Hence, \((v, K) \in \expand{\W^\infty}\).
  We prove our claim by induction on \(i\), the height of \((v', K')\).

  \begin{description}
    \item[Base case:] \(i = 0\).
      Since the height of \((v', K')\) is zero, \((v', K')\) is a leaf.
      By construction of \(\tree\), \(v' = \target\), \ie, \((v',
      K')\) is a target
      vertex for Eve in \(\kGame\).
      Hence, \((v', K') \in \expand{\W^0}\), by definition.

    \item[Induction step.]
      Assume the claim holds for \(i \geq 0\), \ie, for every vertex \((u, L)\)
      at height \(i\), we have \((u, L) \in \expand{\W^i}\).
      Suppose that the height of \((v', K')\) is \(i+1\).
      We show that \((v', K') \in \expand{\W^{i+1}}\).
      Let \(a \in \alphabet\) such that \(\kStrat((v', K')) = (v',
      K', a)\), which
      is a vertex of Adam.
      From there, he can go to some vertex \((u, L)\).
      By construction of \(\tree\), \((u, L)\) is at height \(i\) (as the height
      of \((v', K')\) is \(i + 1\)).
      Hence, by the induction hypothesis, \((u, L) \in \expand{\W^i}\).
      Moreover, by definition of \(\kGame\), \(L = K' \cap
      \opponents{v', a, u}\)
      and \(L\) is not empty (as, otherwise, Adam could not have
      played that edge).

      Therefore, for any vertex \((u, K' \cap \opponents{v', a, u})\) such that
      \(K' \cap \opponents{v', a, u} \neq \emptyset\), we have
      \((u, K' \cap \opponents{v', a, u}) \in \expand{\W^{i}}\).
      By definition, it follows that \((v', K') \in \expand{\W^{i+1}}\).
  \end{description}

  \proofsubparagraph{\(\impliedby\).}
  Assume that \((v, K) \in \expand{\W^\infty}\).
  We construct a winning strategy \(\kStrat\).
  First, for every \((v', K') \in \verticesE \setminus \expand{\W^\infty}\),
  select an arbitrary action \(a\) enabled at \((v', K')\) and define
  \(\kStrat((v', K')) = (v', K', a)\).
  Second, for every \((v', K') \in \expand{\W^\infty}\), let \(i\) be
  the smallest
  index such that \((v', K') \in \expand{\W^i}\), \ie, the iteration \(i\) is
  the first time \((v', K')\) was represented by a set \(\W\).
  Then, by definition, there must exist an action \(a\) that is
  enabled at \(v'\)
  and such that for any \(u \in \vertices\) such that
  \(K' \cap \opponents{v', a, u} \neq \emptyset\), it holds that
  \((u, K' \cap \opponents{v', a, u}) \in \expand{\W^{i-1}}\).
  That is, if we play \(a\), we stay in the \enquote{good} part of \(\kGame\)
  and, in a way, we make progress towards a target vertex.
  Hence, let \(\kStrat((v', K')) = (v', K', a)\).

  It remains to show that for any play
  \(\rho = (v_1, K_1) \cdot \kStrat(v_1, K_1) \cdot
  (v_2, K_2) \cdot \kStrat(v_2, K_2) \dotsb\) induced by \(\kStrat\) from
  \((v, K)\), there exists an index \(n\) such that \(v_n = \target\) (\ie,
  \((v_n, K_n)\) is a target vertex in \(\kGame\)).
  Since \((v_1, K_1) = (v, K) \in \expand{\W^\infty}\), there exists some
  \(i_1\) such that \((v_1, K_1) \in \expand{\W^{i_1}}\).
  Furthermore, \(K_2 \neq \emptyset\) by definition of \(\kGame\), and
  \((v_2, K_2) \in \expand{\W^{i_1-1}}\) by construction of
  \(\kStrat\), which means that there exists some \(i_2 \leq i_1 - 1 < i_1\)
  such that \((v_2, K_2) \in \expand{\W^{i_2}}\),
  \((v_3, K_3) \in \expand{\W^{i_2 - 1}}\), and \(K_3 \neq \emptyset\).
  By repeating this procedure, we eventually reach a moment where
  \((v_n, K_n) \in \expand{\W^0}\) and \(K_n \neq \emptyset\).
  By definition of \(\W^0\), it must necessarily hold that \(v_n = \target\),
  as \(K_n \neq \emptyset\).
  Thus, for whatever play \(\rho\) induced by \(\kStrat\), we eventually reach
  a target vertex, implying that \(\kStrat\) is a winning strategy.
\end{proof}

\fixpointcorollary*

\begin{proof}
  Let $i \in \natnonneg$.  Observe that \((v_0, \natnonneg) \in \W^i\)
  if and only if \((v_0, \natnonneg) \in \expand{\W^i}\), as
  \((v_0, \natnonneg)\) is the maximal element with regard to
  \(\parOrder\).  By \Cref{thm:symbolic:strategy:region}, we thus
  obtain that \((v_0, \natnonneg) \in \W^i\) if and only if Eve has a
  winning strategy from \((v_0, \natnonneg)\) in \(\kGame\).  By
  \Cref{thm:knowledge:strategy}, it follows that Eve has a winning
  strategy from \(v_0\) in \(\game\) if and only if there exists
  $i \in \natnonneg$ such that \((v_0, \natnonneg) \in \W^i\).
\end{proof}

\subsection{Missing proof of Subsection~\ref{subsec:alternative}}

\predAltMonotonicity*
\begin{proof}
  Assume $S \parOrdersim S'$, and pick $K \in
  \KPredAlt{v,a}{S}$.
  For each \(v' \in V\), there exists $(v',K_{v'}) \in S$
  such that
  \(K = \bigcup_{V' \subseteq V} \opponents{v,a,V'} \cap
  \bigcap_{v' \in V'} K_{v'}\).
  For every $v' \in V'$, let $K'_{v'}$ be such that $(v',K'_{v'}) \in S'$ and
  $K_{v'} \subseteq K'_{v'}$, and define
  \(K' = \bigcup_{V' \subseteq V} \opponents{v,a,V'} \cap
  \bigcap_{v' \in V'} K'_{v'}\).
  Then, $K' \in \KPredAlt{v,a}{S'}$ and $K \subseteq K'$. Therefore,
  $\KPredAlt{v,a}{S} \subseteq \expand{(\KPredAlt{v,a}{S'})}$.
  As an immediate consequence $\PredAlt{S} \parOrdersim \PredAlt{S'}$.
\end{proof}

As a consequence, we get the following corollary, relating the maximal
elements computed by $\KPredAlt{\cdot,\cdot}{\cdot}$ on a set or its
downward-closure.
\begin{corollary}\label{coro:PredAlt}
  For any $(v,a ) \in \vertices \times \alphabet$ and $S \subseteq \verticesE$,
  \[\reduce{\KPredAlt{v,a}{S}} = \reduce{\KPredAlt{v,a}{\expand{S}}}.\]
\end{corollary}
\begin{proof}
  For any set \(S\), we have
  $S \parOrdersim \expand{S}$ and $\expand{S} \parOrdersim S$.
  By \Cref{lemma:PredAlt}, we get
  \begin{gather*}
    \KPredAlt{v,a}{S} \subseteq \expand{(\KPredAlt{v,a}{\expand{S}})}
    \shortintertext{and}
    \KPredAlt{v,a}{\expand{S}} \subseteq \expand{(\KPredAlt{v,a}{S})}.
  \end{gather*}
  By \Cref{lemma:reduce}, this implies the expected result.
\end{proof}

The latter result justifies that when defining $\WAlt^{i+1}$, there is
no need to apply the downward-closure to $\WAlt^i$, in contrast to
the sequence ${(\W^i)}_{i \in \nat}$.

\subsection{Missing proofs of
Subsection~\ref{subsec:twofixedpointcomputations}}

\fixpointequality*

\begin{proof}
  The proof of this theorem relies on several intermediary results.

  \begin{lemma}\label{lemma:symbolic:know:alt}
    For every vertex \(v \in \vertices\), action \(a \in \En(v)\) and
    set $S \subseteq \verticesE$,
    \[
      \left\{
        \begin{array}{l}
          \KPred{v, a}{S} \subseteq \KPredAlt{v, a}{S},\quad \text{and} \\
          \KPredAlt{v,a}{S} \subseteq \KPred{v,a}{\expand{S}}.
        \end{array}\right.
      \]
    \end{lemma}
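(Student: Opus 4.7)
The two inclusions are handled separately and rely in an essential way on the partition property of the sets $\opponents{v,a,V'}$: by definition~(\ref{nablaI}) and completeness for enabled actions, the family ${(\opponents{v,a,V'})}_{V'\subseteq V}$ partitions $\natnonneg$ (only the empty $V'$ may contribute an empty cell when $a\in\En(v)$), and moreover $\opponents{v,a,V'} \subseteq \opponents{v,a,v'}$ whenever $v' \in V'$, while $\opponents{v,a,V'} \cap \opponents{v,a,v'} = \emptyset$ whenever $v' \notin V'$. I will use these two facts throughout.

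For the first inclusion, pick $K \in \KPredK{v,a}{S}$. To exhibit $K$ as a value of the $\KPredAlt{v,a}{\cdot}$ operator, I choose the witnesses $K_{v'} := K \cap \opponents{v,a,v'}$ for every $v' \in V$. By assumption $(v', K_{v'}) \in S$, so these are admissible witnesses. It then remains to verify that
\[
  \bigcup_{V' \subseteq V} \opponents{v,a,V'} \cap \bigcap_{v' \in V'} K_{v'} \;=\; K.
\]
Using $\opponents{v,a,V'} \subseteq \opponents{v,a,v'}$ for $v' \in V'$, each inner intersection simplifies to $K \cap \opponents{v,a,V'}$, and since the $\opponents{v,a,V'}$ cover $\natnonneg$ by~(\ref{K:complete}), the union equals $K$.

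For the second inclusion, pick $K \in \KPredAlt{v,a}{S}$ witnessed by $(v',K_{v'}) \in S$. Given $v'' \in V$, I want to show $(v'', K \cap \opponents{v,a,v''}) \in \expandK{S}$. I compute
\[
  K \cap \opponents{v,a,v''} \;=\; \bigcup_{V' \subseteq V} \bigl(\opponents{v,a,V'} \cap \opponents{v,a,v''}\bigr) \cap \bigcap_{v' \in V'} K_{v'}.
\]
By the partition property, only sets $V'$ containing $v''$ contribute; isolating the $K_{v''}$ factor in the intersection yields
\[
  K \cap \opponents{v,a,v''} \;=\; K_{v''} \cap \bigcup_{V' \ni v''} \opponents{v,a,V'} \cap \bigcap_{v' \in V' \setminus \{v''\}} K_{v'} \;\subseteq\; K_{v''}.
\]
Since $\latticeK$ is $\arena$-coherent, $K \cap \opponents{v,a,v''} \in \latticeK$, so $(v'', K \cap \opponents{v,a,v''}) \parOrder (v'', K_{v''}) \in S$ gives the membership in $\expandK{S}$. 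Doing this for every $v''$ yields $K \in \KPredK{v,a}{\expandK{S}}$.

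The only delicate step is the second inclusion, where one must be careful to keep track of which $V'$ contribute to a given $v''$ in order to bound the computed set by $K_{v''}$. Note that taking $\expandK{S}$ rather than $S$ is genuinely needed: $K \cap \opponents{v,a,v''}$ may be a strict subset of $K_{v''}$, so without the downward closure it would not generally belong to $S$.
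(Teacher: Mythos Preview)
Your proof is correct and follows essentially the same approach as the paper: for the first inclusion you take the witnesses $K_{v'} = K \cap \opponents{v,a,v'}$ and use the covering property~\eqref{K:complete} to recover $K$; for the second inclusion you show $K \cap \opponents{v,a,v''} \subseteq K_{v''}$ by restricting to those $V'$ containing $v''$, and then invoke $\arena$-coherence to land in $\expandK{S}$. The only minor remark is that your parenthetical claim that ``only the empty $V'$ may contribute an empty cell'' is not quite right (many $V'$ can have $\opponents{v,a,V'}=\emptyset$), but you only use the covering, not disjointness, so this has no effect on the argument.
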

    \begin{proof}[Proof of Lemma~\ref{lemma:symbolic:know:alt}]
      Let \(v \in \vertices\), \(a \in \En(v)\) and
      $S \subseteq \verticesE$. We prove the two inclusions.

      Let
      $K \in \KPred{v,a}{S}$. For every $j \in \interval{1}{n}$, let
      $K_j = K \cap \opponents{v,a,v_j}$; in particular,
      $(v_j,K_j) \in S$.
      The following set then belongs to $\KPredAlt{v,a}{S}$:
      \[
        \bigcup_{I \subseteq \interval{1}{n}} \opponents{v,a,I} \cap
        \bigcap_{j \in I} K_j = \bigcup_{I \subseteq \interval{1}{n}}
        \opponents{v,a,I} \cap K = K \quad\text{(by~\Cref{K:complete}
        page~\pageref{K:complete})}.
      \]
      This shows that \(\KPred{v, a}{S} \subseteq \KPredAlt{v, a}{S}\).

      Now, let
      $K \in \KPredAlt{v,a}{S}$. For every $j \in \interval{1}{n}$, let
      $K_j \subseteq \natnonneg$ be such that $(v_j,K_j) \in S$ and
      \(K = \bigcup_{I \subseteq \interval{1}{n}} \opponents{v,a,I} \cap
      \bigcap_{j \in I} K_j\).
      Pick $j \in \interval{1}{n}$, and check that:
      \[
        K \cap \opponents{v,a,v_j} = \bigcup_{I \subseteq \interval{1}{n},
        j \in I} K_j \cap \opponents{v,a,I} \subseteq K_j.
      \]
      We therefore get that
      $(v_j, K \cap \opponents{v,a,v_j}) \in \expand{S}$, from which we conclude
      that $K \in \KPred{v,a}{\expand{S}}$.
    \end{proof}
    Note that the second inclusion requires the downward closure of
    $S$. For instance $v_1 \xrightarrow{a,[2,5]} v_2$ and
    $(v_2,\natnonneg) \in S$. Then $K_2 = \natnonneg$ and
    $\opponents{v_1,a,v_2} \cap K_2 = [2,5]$ is not in $S$. This is also
    the reason why the fix-point defining ${(\W^i)}_{i \in \nat}$ requires
    the downward closure in its definition.

    As a consequence of the previous lemma, we get:

    \begin{corollary}\label{corollary:symbolic:know:alt}
      For every vertex \(v\), action \(a \in \En(v)\) and set
      $S \subseteq \verticesE$ which is downward-closed (that is, $S
      = \expand{S}$),
      \[\KPred{v, a}{S} = \KPredAlt{v, a}{S}\quad \text{and}\quad\Pred{S}
      = \PredAlt{S}.\]
    \end{corollary}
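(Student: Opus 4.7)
The plan is to derive this corollary directly from Lemma~\ref{lemma:symbolic:know:alt}, which has just been proved, by exploiting the hypothesis $S = \expandK{S}$ to eliminate the $\latticeK$-downward closure that appears on the right-hand side of its second inclusion.

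Concretely, I would instantiate Lemma~\ref{lemma:symbolic:know:alt} with the given $S$ and use its two conclusions. The first, $\KPredK{v,a}{S} \subseteq \KPredAlt{v,a}{S}$, already holds for any $S \subseteq \verticesEK$ without any closure assumption, so it can be used as-is. The second, $\KPredAlt{v,a}{S} \subseteq \KPredK{v,a}{\expandK{S}}$, collapses to $\KPredAlt{v,a}{S} \subseteq \KPredK{v,a}{S}$ as soon as $S = \expandK{S}$. Chaining these two inclusions yields the pointwise equality $\KPredK{v,a}{S} = \KPredAlt{v,a}{S}$ for every $v \in \vertices$ and every $a \in \En(v)$.

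For the second identity, I would simply unfold the definitions of $\PredK{S}$ and $\PredAlt{S}$: each is a union over $v \in \vertices$ and $a \in \En(v)$ of pairs $(v,K)$ with $K$ ranging in the corresponding $\KPred$-set. The pointwise equality just obtained then lifts immediately to equality of these two unions, without any additional work.

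There is essentially no obstacle here, which reflects a deliberate design choice in the lemma: the subtlety concerning the $\latticeK$-downward closure was absorbed into its statement, so that the corollary for downward-closed arguments becomes a one-line consequence. The only point worth flagging in the exposition is that the closure hypothesis cannot be dropped, as the counterexample sketched just after the lemma (an edge $v_1 \xrightarrow{a,[2,5]} v_2$ together with $(v_2,\natnonneg) \in S$) witnesses a set $[2,5]$ in $\KPredAlt{v_1,a}{S}$ but not in $\KPredK{v_1,a}{S}$.
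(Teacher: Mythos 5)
Your proposal is correct and matches the paper's treatment: the paper states this corollary as an immediate consequence of Lemma~\ref{lemma:symbolic:know:alt}, obtained precisely by chaining the two inclusions once $S = \expandK{S}$ collapses the second one, and lifting pointwise equality to the unions defining the $\mathsf{Pred}$ operators. Nothing further is needed.
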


    We now prove the first part of Theorem~\ref{thm:same-fixed-point},
    which states the equality of the sequences.
    \begin{lemma}\label{coro:WAlt=W}
      For every $i \in \nat$, \(\WAlt^i = \W^i\).
    \end{lemma}
    \begin{proof}[Proof of Lemma~\ref{coro:WAlt=W}]
      We show the result by induction on $i$. The equality obviously holds
      for $i=0$.  Assume now it holds for $i$, then
      \begin{align*}
        \WAlt^{i+1} &= \WAlt^i \lub \PredAlt{\WAlt^i} \\
        &= \W^i \lub \PredAlt{\WAlt^i} \quad \text{by
        induction hypothesis}\\
        &= \W^i \lub \PredAlt{\expand{\WAlt^i}} \quad \text{by
        \Cref{coro:PredAlt}}
        \\
        &=\W^i \lub \Pred{\expand{\W^i}} \quad \text{by
        \Cref{corollary:symbolic:know:alt}}\\
        &= \W^{i+1}
      \end{align*}
      This proves the induction step, and therefore the lemma.
    \end{proof}

    To conclude the proof of \Cref{thm:same-fixed-point}, it remains to
    show that the sequences converge in finite time.
    By \Cref{lemma:PredAlt:termination}, we get that the sequence
    \({(\WAlt^i)}_{i \in \natnonneg}\) stabilizes, and so is the case for
    \({(\W^i)}_{i \in \natnonneg}\) by the shown equality.
  \end{proof}

  \section{Details on the experiments of
  Section~\ref{sec:experiments}}\label{app:experiments}

  The implementation of our tool \textsf{ParaGraphs} heavily relies on
  C++ templates, in order to provide a generic implementation that does
  not need to consider the semantics of the constraints.  It is however
  possible to specialize those templates in order to define a specific
  algorithm for a given type.  For instance, one could implement the
  \PTIME algorithm from~\cite{BertrandBM19} when constraints are known
  to be intervals, or generalize the current
  implementation to semi-linear constraints.

  We now give more information about the construction of \(\latticeK\).
  The lattice \(\latticeK\) is stored in an array, which permits to denote
  an element of the lattice by its index. Its construction is as
  follows.  First, insert the elements \(\natnonneg, \emptyset\), and
  \(\opponents{v, a, v'}\) for every \(v, v' \in \vertices\) and
  \(a \in \alphabet\).  While inserting these elements, we also maintain
  a set of (direct) parents and a set of (direct) children of each
  element.  Initially, the only child of \(\natnonneg\) is \(\emptyset\)
  and the only parent of \(\emptyset\) is \(\natnonneg\).  Then, when we
  insert a set \(K\), we can find the children of \(K\) as follows:
  starting from \(\natnonneg\), we move down the lattice as long as the
  current element includes \(K\).  Since a node may have multiple
  children, multiple branches have to be explored, but we can guarantee
  to not visit the same node more than once (by storing the indices of
  the nodes we already traversed).  When we know the sets of parents and
  children of \(K\), updating the set of children of another element
  \(K'\) is easy: remove every child of \(K'\) that is also a child of
  \(K\) and insert \(K\) as a new child of \(K'\).  (Parents can be
  updated similarly.)  Once these initial elements are added, it
  suffices to iterate over \(K, K' \in \latticeK\) and compute
  \(K \cup K'\), \(K \setminus K'\), and \(K' \setminus K\), and insert
  these new sets if they are not yet present, while updating the parents
  and children as previously.

  \paragraph{Tailored construction for QBF.}

  Finally, we give more details about the construction of \(\latticeK\)
  for QBF instances with \(n\) variables (the clauses have no impact on
  \(\latticeK\)).  We initially add the sets \(\{1\}, \natsubs{>1},
  \dotsc, \{n\}\), and \(\natsubs{>n}\)
  in the lattice.  Then, for each \(i \in \{1, \dotsc, n\}\), we select
  \(\{i\}, \natsubs{>i}\), or \(\emptyset\), and compute the union of these
  selections.  Observe that, if we select \(\natsubs{>i}\) for some \(i\), the
  choice for \(j > i\) becomes irrelevant, as \(\{j\}\) and
  \(\natsubs{>j}\) are both
  included in \(\natsubs{>i}\).  That is, when we pick
  \(\natsubs{>i}\), we do not need to
  make further choices.  This has a nice consequence: every set we
  obtain is unique and is not yet a member of the lattice.  We can thus
  immediately insert the set, which saves us from checking whether it is
  already present.  We maintain the sets of parents and children as in
  the more general construction.  This QBF-tailored algorithm is
  implemented in \textsf{ParaGraphs}.

  \subsection{Details on experimental results}

  \paragraph{Synthetic arenas.}

  \Cref{app:fig:arenas:experiments} gives the synthetic parameterized
  arenas used
  in our benchmarks, while the time needed to construct \(\latticeK\)
  is given in
  \Cref{app:tab:experiments:lattice:games}.
  Finally, \Cref{app:tab:experiments:more} provide the number of iterations and
  generated sets by the \(\W\) and \(\WAlt\) methods, and the number
  of sub-games
  explored by the algorithm from~\cite{BertrandBM19}. Notice that \(\W\) and
  \(\WAlt\) always take the same number of iterations, which is expected by
  \Cref{thm:same-fixed-point}.

  \begin{figure}
    \centering
    \begin{subfigure}{.45\textwidth}
      \centering
      \begin{tikzpicture}[
          game,
          node distance = 25pt and 40pt,
        ]
        \node [state]                               (v)   {\(v\)};
        \node [state, above right=of v]             (x1)  {\(x_1\)};
        \node [state, below right=of v]             (x2)  {\(x_2\)};
        \node [state, right=of x1]                  (y1)  {\(y_1\)};
        \node [state, right=of x2]                  (y2)  {\(y_2\)};
        \node [state, accepting, below right=of y1] (t)   {\(t\)};
        \node [below=of x2]                         (xi)  {\vdots};
        \node [below=of y2]                         (yi)  {\vdots};
        \node [state, below=of xi]                  (xn)  {\(x_n\)};
        \node [state, below=of yi]                  (yn)  {\(y_n\)};

        \path
        (v)   edge                  node [sloped]           {\(a_1\)}       (x1)
        edge                  node [sloped, pos=0.6]  {\(a_2\)}       (x2)
        (x1)  edge                  node                    {\(b,
        \{1\}\)}     (y1)
        (x2)  edge                  node [']                {\(b,
        \{2\}\)}     (y2)
        (y1)  edge                  node [sloped]           {\(b\)}         (t)
        (y2)  edge                  node [', sloped]        {\(b\)}         (t)
        (xn)  edge                  node                    {\(b,
        \{n\}\)}     (yn)
        ;

        \draw [rounded corners = 10pt]
        let
        \p{s} = (x1.west),
        \p{t} = (v.north),
        in
        (\p{s}) -- (\x{t}, \y{s})
        node [above, pos=0.4] {\(b, \natsubs{\neq 1}\)}
        -- (\p{t})
        ;
        \draw [rounded corners = 10pt]
        let
        \p{s} = (x2.west),
        \p{t} = (v.-50),
        in
        (\p{s}) -- (\x{t}, \y{s})
        node [below, pos=0.4] {\(b, \natsubs{\neq 2}\)}
        -- (\p{t})
        ;
        \draw [rounded corners = 10pt]
        let
        \p{s} = (v.-90),
        \p{t} = (xn.160),
        in
        (\p{s}) -- (\x{s}, \y{t})
        -- (\p{t})
        node [above, pos=0.4] {\(a_n\)}
        ;
        \draw [rounded corners = 10pt]
        let
        \p{s} = (xn.-160),
        \p{t} = (v.-130),
        in
        (\p{s}) -- (\x{t}, \y{s})
        node [below, pos=0.4] {\(b, \natsubs{\neq n}\)}
        -- (\p{t})
        ;
        \draw [rounded corners = 10pt]
        let
        \p{s} = (yn.east),
        \p{t} = (t.south),
        in
        (\p{s}) -- (\x{t}, \y{s})
        node [above, pos=0.4] {\(b\)}
        -- (\p{t})
        ;
      \end{tikzpicture}
      \caption{The parameterized arena D-NW-1.}
    \end{subfigure}
    \begin{subfigure}{.45\textwidth}
      \centering
      \begin{tikzpicture}[
          game,
          node distance = 25pt and 35pt,
        ]
        \node [state]                               (v)   {\(v\)};
        \node [state, above right=of v]             (x1)  {\(x_1\)};
        \node [state, below right=of v]             (x2)  {\(x_2\)};
        \node [state, right=of x1]                  (y1)  {\(y_1\)};
        \node [state, right=of x2]                  (y2)  {\(y_2\)};
        \node [state, accepting, below right=of y1] (t)   {\(t\)};
        \node [state, left=of v]                    (bad) {\(s\)};
        \node [below=of x2]                         (xi)  {\vdots};
        \node [below=of y2]                         (yi)  {\vdots};
        \node [state, below=of xi]                  (xn)  {\(x_n\)};
        \node [state, below=of yi]                  (yn)  {\(y_n\)};

        \path
        (v)   edge                  node [sloped]           {\(a_1\)}       (x1)
        edge                  node [sloped, pos=0.6]  {\(a_2\)}       (x2)
        edge                  node [sloped]           {\(c,
        \natsubs{\leq n}\)} (bad)
        edge                  node                    {\(c,
        \natsubs{> n}\)}    (t)
        (x1)  edge                  node                    {\(b,
        \{1\}\)}     (y1)
        (x2)  edge                  node [']                {\(b,
        \{2\}\)}     (y2)
        (y1)  edge                  node [sloped]           {\(b\)}         (t)
        (y2)  edge                  node [', sloped]        {\(b\)}         (t)
        (xn)  edge                  node                    {\(b,
        \{n\}\)}     (yn)
        ;

        \draw [rounded corners = 10pt]
        let
        \p{s} = (x1.west),
        \p{t} = (v.north),
        in
        (\p{s}) -- (\x{t}, \y{s})
        node [above, pos=0.4] {\(b, \natsubs{\neq 1}\)}
        -- (\p{t})
        ;
        \draw [rounded corners = 10pt]
        let
        \p{s} = (x2.west),
        \p{t} = (v.-60),
        in
        (\p{s}) -- (\x{t}, \y{s})
        node [below, pos=0.4] {\(b, \natsubs{\neq 2}\)}
        -- (\p{t})
        ;
        \draw [rounded corners = 10pt]
        let
        \p{s} = (v.-90),
        \p{t} = (xn.160),
        in
        (\p{s}) -- (\x{s}, \y{t})
        -- (\p{t})
        node [above, pos=0.4] {\(a_n\)}
        ;
        \draw [rounded corners = 10pt]
        let
        \p{s} = (xn.-160),
        \p{t} = (v.-130),
        in
        (\p{s}) -- (\x{t}, \y{s})
        node [below, pos=0.4] {\(b, \natsubs{\neq n}\)}
        -- (\p{t})
        ;
        \draw [rounded corners = 10pt]
        let
        \p{s} = (yn.east),
        \p{t} = (t.south),
        in
        (\p{s}) -- (\x{t}, \y{s})
        node [above, pos=0.4] {\(b\)}
        -- (\p{t})
        ;
      \end{tikzpicture}
      \caption{The parameterized arena D-W-1.}
    \end{subfigure}

    \begin{subfigure}{.45\textwidth}
      \centering
      \begin{tikzpicture}[
          game,
          node distance = 30pt and 55pt,
        ]
        \node [state]                               (v)   {\(v\)};
        \node [state, above=of v]              (s)   {\(s\)};
        \node [state, accepting, above right=of v]  (t v) {\(t\)};
        \node [state, right=of v]                   (xn)  {\(x_n\)};
        \node [state, right=of xn]                  (sn)  {\(s_n\)};
        \node [below=of xn]                         (dots){\(\vdots\)};
        \node [state, below=of dots]                (x1)  {\(x_1\)};
        \node [accepting, state, above right=of x1] (t1)  {\(t\)};
        \node [state, right=of x1]                  (s1)  {\(s\)};

        \path
        (v)   edge [bend left=15] node [sloped, near end] {\(a,
        \natsubs{>n}\)}           (t v)
        edge                node                    {\(a,
        \natsubs{\leq n}\)}       (s)
        edge                node                    {\(b_n\)}             (xn)
        (xn)  edge [bend left]    node                    {\(a,
        \natsubs{< n}\)}          (v)
        edge                node                    {\(a, \{n\}\)}
        (t v)
        edge                node                    {\(a, \natsubs{>
        n}\)}          (sn)
        (x1)  edge                node                    {\(a,
        \natsubs{> 1}\)}          (s1)
        ;

        \draw [rounded corners = 10pt]
        let
        \p{s} = (v.south),
        \p{t} = (x1.west),
        in
        (\p{s}) -- (\x{s}, \y{t})
        node [left, midway] {\(b_1, \natnonneg\)}
        -- (\p{t})
        ;
        \draw [rounded corners = 10pt]
        let
        \p{s} = (x1.30),
        \p{t} = (t1.west),
        in
        (\p{s}) -- ($(\x{s}, \y{t}) + (1, 0)$)
        -- (\p{t})
        node [above, near start] {\(a, \{1\}\)}
        ;
      \end{tikzpicture}     \caption{The parameterized arena D-NW-2.}
    \end{subfigure}
    \begin{subfigure}{.45\textwidth}
      \centering
      \begin{tikzpicture}[
          game,
          node distance = 30pt and 55pt,
        ]
        \node [state]                               (v)   {\(v\)};
        \node [state, accepting, above right=of v]  (t v) {\(t\)};
        \node [state, right=of v]                   (xn)  {\(x_n\)};
        \node [state, right=of xn]                  (sn)  {\(s_n\)};
        \node [below=of xn]                         (dots){\(\vdots\)};
        \node [state, below=of dots]                (x1)  {\(x_1\)};
        \node [accepting, state, above right=of x1] (t1)  {\(t\)};
        \node [state, right=of x1]                  (s1)  {\(s\)};

        \path
        (v)   edge [bend left]              node [sloped] {\(a,
        \natsubs{>n}\)}           (t v)
        edge [in=100, out=130, loop]  node          {\(a,
        \natsubs{\leq n}\)}       (v)
        edge                          node          {\(b_n\)}             (xn)
        (xn)  edge [bend left]              node          {\(a,
        \natsubs{< n}\)}          (v)
        edge                          node          {\(a, \{n\}\)}
        (t v)
        edge                          node          {\(a, \natsubs{>
        n}\)}          (sn)
        (x1)  edge                          node          {\(a,
        \natsubs{> 1}\)}          (s1)
        ;

        \draw [rounded corners = 10pt]
        let
        \p{s} = (v.south),
        \p{t} = (x1.west),
        in
        (\p{s}) -- (\x{s}, \y{t})
        node [left, midway] {\(b_1, \natnonneg\)}
        -- (\p{t})
        ;
        \draw [rounded corners = 10pt]
        let
        \p{s} = (x1.30),
        \p{t} = (t1.west),
        in
        (\p{s}) -- ($(\x{s}, \y{t}) + (1, 0)$)
        -- (\p{t})
        node [above, near start] {\(a, \{1\}\)}
        ;
      \end{tikzpicture}     \caption{The parameterized arena D-W-2.}
    \end{subfigure}

    \begin{subfigure}{\textwidth}
      \centering
      \begin{tikzpicture}[
          game,
          node distance = 25pt and 60pt,
        ]
        \node [state]                               (v)   {\(v\)};
        \node [state, above right=of v]             (x1)  {\(x_1\)};
        \node [state, below right=of v]             (x2)  {\(x_2\)};
        \node [state, right=of x1]                  (y1)  {\(y_1\)};
        \node [state, right=of x2]                  (y2)  {\(y_2\)};
        \node [state, accepting, below right=of y1] (t)   {\(t\)};
        \node [state, left=of v]                    (bad) {\(s\)};
        \node [below=of x2]                         (xi)  {\vdots};
        \node [below=of y2]                         (yi)  {\vdots};
        \node [state, below=of xi]                  (xn)  {\(x_n\)};
        \node [state, below=of yi]                  (yn)  {\(y_n\)};

        \path
        (v)   edge                  node [sloped]           {\(a\)}         (x1)
        edge                  node [sloped, pos=0.6]  {\(a\)}         (x2)
        edge                  node [sloped]           {\(c,
        \natsubs{\leq n}\)} (bad)
        edge                  node                    {\(c,
        \natsubs{> n}\)}    (t)
        (x1)  edge                  node                    {\(b,
        \{1\}\)}     (y1)
        (x2)  edge                  node [']                {\(b,
        \{2\}\)}     (y2)
        (y1)  edge                  node                    {\(b\)}         (t)
        (y2)  edge                  node [']                {\(b\)}         (t)
        (xn)  edge                  node                    {\(b,
        \{n\}\)}     (yn)
        ;

        \draw [rounded corners = 10pt]
        let
        \p{s} = (x1.west),
        \p{t} = (v.north),
        in
        (\p{s}) -- (\x{t}, \y{s})
        node [above, pos=0.4] {\(b, \natsubs{\neq 1}\)}
        -- (\p{t})
        ;
        \draw [rounded corners = 10pt]
        let
        \p{s} = (x2.west),
        \p{t} = (v.-60),
        in
        (\p{s}) -- (\x{t}, \y{s})
        node [below, pos=0.4] {\(b, \natsubs{\neq 2}\)}
        -- (\p{t})
        ;
        \draw [rounded corners = 10pt]
        let
        \p{s} = (v.-90),
        \p{t} = (xn.160),
        in
        (\p{s}) -- (\x{s}, \y{t})
        -- (\p{t})
        node [above, pos=0.4] {\(a\)}
        ;
        \draw [rounded corners = 10pt]
        let
        \p{s} = (xn.-160),
        \p{t} = (v.-130),
        in
        (\p{s}) -- (\x{t}, \y{s})
        node [below, pos=0.4] {\(b, \natsubs{\neq n}\)}
        -- (\p{t})
        ;
        \draw [rounded corners = 10pt]
        let
        \p{s} = (yn.east),
        \p{t} = (t.south),
        in
        (\p{s}) -- (\x{t}, \y{s})
        node [above, pos=0.4] {\(b\)}
        -- (\p{t})
        ;
      \end{tikzpicture}     \caption{The parameterized arena ND-NW.}
    \end{subfigure}
    \caption{The synthetic arenas used in the experiments of
      \Cref{sec:experiments}.
      We omit the constraint if it is
    \(\natnonneg\).}\label{app:fig:arenas:experiments}
  \end{figure}

  \begin{table}
    \caption{Time to construct the lattice for the games.
      Timeout is set to five minutes.
      Since D-W-1 and ND-NW (resp.\ D-NW-2 and D-W-2) have the same
      number of edges and the
      same constraints appearing on them, they are grouped together in
    this table.}\label{app:tab:experiments:lattice:games}
    \centering
    \footnotesize
    \setlength{\tabcolsep}{5pt}
    \begin{tabular}{@{}
        lrr
        @{\extracolsep{2\tabcolsep}}r
        @{\extracolsep{\tabcolsep}}r
      @{}}
      \toprule
      \multicolumn{3}{c}{Game} & \multicolumn{2}{c}{Lattice}
      \\
      \cmidrule{1-3}\cmidrule{4-5}
      Name & Value of \(n\) & Actual size & Size & Construction time
      \\
      \midrule
      \multirow[c]{3}{*}{D-NW-1} & 10 & 22 & 2048 & 2s 934ms \\
      & 11 & 24 & 4096 & 13s 732ms \\
      & 12 & 26 & 8192 & 66s 148ms \\
      \midrule
      \multirow[c]{3}{*}{D-W-1 and ND-NW} & 10 & 23 & 2048 & 2s 949ms \\
      & 11 & 25 & 4096 & 14s 172ms \\
      & 12 & 27 & 8192 & 68s 391ms \\
      \midrule
      \multirow[c]{3}{*}{D-NW-2 and D-W-2} & 10 & 13 & 2048 & 3s 115ms \\
      & 11 & 14 & 4096 & 14s 871ms \\
      & 12 & 15 & 8192 & 71s 543ms \\
      \bottomrule
    \end{tabular}
  \end{table}

  \begin{table}
    \caption{Number of iterations and generated sets for the \(\W\)
      and \(\WAlt\)
      methods, and the number of explored sub-games for the DFS algorithm
    from~\cite{BertrandBM19} on the synthetic arenas.}
    \label{app:tab:experiments:more}
    \centering
    \footnotesize
    \setlength{\tabcolsep}{5pt}
    \begin{tabular*}{\textwidth}{
        @{}
        lrr
        @{\extracolsep{\fill}}r@{\extracolsep{\tabcolsep}}r
        @{}r@{\extracolsep{\tabcolsep}}r
        @{\extracolsep{\fill}}r
        @{}
      }
      \toprule
      \multicolumn{3}{c}{Game} &
      \multicolumn{2}{c}{\(\W\)} &
      \multicolumn{2}{c}{\(\WAlt\)} &
      \multicolumn{1}{c}{Algorithm from~\cite{BertrandBM19}}
      \\
      \cmidrule{1-3}\cmidrule{4-5}\cmidrule{6-7}\cmidrule{8-8}
      Name & Value of \(n\) & Actual size &
      Iterations & Sets &
      Iterations & Sets &
      Sub-games
      \\
      \midrule
      \multirow[c]{6}{*}{D-NW-1} & 10 & 22 & 22 & 1024 & 23 & 1024 & 19728201 \\
      & 11 & 24 & 24 & 2048 & 24 & 2048 & timeout \\
      & 12 & 26 & 26 & 4096 & 26 & 4096 & timeout \\
      & 13 & 28 & timeout & timeout & 28 & 8192 & timeout \\
      & 14 & 30 & timeout & timeout & 30 & 16384 & timeout \\
      & 15 & 32 & timeout & timeout & 32 & 32768 & timeout \\
      \midrule
      \multirow[c]{6}{*}{D-W-1} & 10 & 23 & 22 & 1024 & 22 & 1024 & 31 \\
      & 11 & 25 & 24 & 2048 & 24 & 2048 & 34 \\
      & 12 & 27 & 26 & 4096 & 26 & 4096 & 37 \\
      & 13 & 29 & timeout & timeout & 28 & 8192 & 40 \\
      & 14 & 31 & timeout & timeout & 30 & 16384 & 43 \\
      & 15 & 33 & timeout & timeout & 32 & 32768 & 46 \\
      \midrule
      \multirow[c]{6}{*}{D-NW-2} & 10 & 13 & 21 & 1024 & 21 & 1024 & 135 \\
      & 11 & 14 & 23 & 2048 & 23 & 2048 & 169 \\
      & 12 & 15 & 25 & 4096 & 25 & 4096 & 204 \\
      & 13 & 16 & timeout & timeout & 27 & 8192 & 2861 \\
      & 14 & 17 & timeout & timeout & 29 & 16384 & 3962 \\
      & 15 & 18 & timeout & timeout & 31 & 32768 & 5064 \\
      \midrule
      \multirow[c]{6}{*}{D-W-2} & 10 & 13 & 22 & 2047 & 22 & 2047 & 34 \\
      & 11 & 14 & 24 & 4095 & 24 & 4095 & 35 \\
      & 12 & 15 & 26 & 8191 & 26 & 8191 & 103 \\
      & 13 & 16 & timeout & timeout & 28 & 16383 & 1101 \\
      & 14 & 17 & timeout & timeout & 30 & 32767 & 1102 \\
      & 15 & 18 & timeout & timeout & timeout & timeout & 1103 \\
      \midrule
      \multirow[c]{6}{*}{ND-NW} & 10 & 23 & 3 & 12 & 3 & 12 & 25963502 \\
      & 11 & 25 & 3 & 13 & 3 & 13 & timeout \\
      & 12 & 27 & 3 & 14 & 3 & 14 & timeout \\
      & 13 & 29 & timeout & timeout & 3 & 15 & timeout \\
      & 14 & 31 & timeout & timeout & 3 & 16 & timeout \\
      & 15 & 33 & timeout & timeout & 3 & 17 & timeout \\
      \bottomrule
    \end{tabular*}
  \end{table}

  \paragraph{Arenas obtained from quantified Boolean formulas.}

  \Crefrange{fig:app:experiments:solve:qbf:4}{fig:app:experiments:solve:qbf:6}
  show the time and the memory used by the three algorithms on the 50
  randomly generated QBF instances with
  four, five, and six variables, respectively.
  While \(\W\) and \(\WAlt\) manage to finish all formulas in time (\ie, never
  time out), the DFS
  algorithm of \cite{BertrandBM19} sometimes times out when the number of
  variables is greater or equal to five. The number of timeouts is also
  provided in the figures, when relevant.
  Finally, \Cref{app:tab:experiments:lattice:qbf} gives the time needed to
  construct \(\latticeK\) from the QBF instances.

  \begin{figure}[ht]
    \centering
    \begin{subfigure}{\textwidth}
      \centering
      \begin{tikzpicture}[
          /pgfplots/table/x expr={(\thisrow{Variables}==4 &&
          \thisrow{Clauses}>=60)?\thisrow{Clauses}:nan},
        ]
        \begin{axis}[
            footnotesize,
            width = 1\textwidth,
            height = 120pt,
            axis x line = bottom,
            axis y line = left,
            xlabel = {Number of clauses},
            xlabel near ticks,
            ylabel = {Time (ms)},
            ylabel near ticks,
            ylabel shift = {-5pt},
            ymajorgrids = true,
            xtick = data,
            enlarge y limits = 0.2,
            enlarge x limits = 0.15,
            nodes near coords = {\pgfmathprintnumber[1000
            sep={\,}]\pgfplotspointmeta},
          ]
          \addplot+ [
            Red,
            only marks,
            mark=*,
          ]
          table [
            y expr = \thisrow{W time total (ms)},
            col sep = comma,
          ]
          {qbf_solving.csv}
          ;
          \addplot+ [
            Blue,
            only marks,
            mark=triangle*,
            nodes near coords style = {below},
          ]
          table [
            y expr = \thisrow{WAlt time (ms)},
            col sep = comma,
          ]
          {qbf_solving.csv}
          ;
          \addplot+ [
            Indigo,
            only marks,
            mark=square*,
          ]
          table [
            y expr = \thisrow{Explicit time (ms)},
            col sep = comma,
          ]
          {qbf_solving.csv}
          ;
        \end{axis}
      \end{tikzpicture}
      \caption{Time. The values for \(\WAlt\) are given below the blue
      triangles, while values for \(\W\) are above the red circles.}
    \end{subfigure}

    \begin{subfigure}{\textwidth}
      \begin{tikzpicture}[
          /pgfplots/table/x expr={(\thisrow{Variables}==4 &&
          \thisrow{Clauses}>=60)?\thisrow{Clauses}:nan},
        ]
        \begin{axis}[
            footnotesize,
            width = 1\textwidth,
            height = 120pt,
            axis x line = bottom,
            axis y line = left,
            xlabel = {Number of clauses},
            xlabel near ticks,
            ylabel = {Memory (MB)},
            ylabel near ticks,
            ylabel shift = {-5pt},
            ymajorgrids = true,
            xtick = data,
            enlarge y limits = 0.3,
            enlarge x limits = 0.1,
            nodes near coords={\pgfmathprintnumber[1000
            sep={\,}]\pgfplotspointmeta},
          ]
          \addplot+ [
            Red,
            only marks,
            mark=*,
            nodes near coords style = {left},
          ]
          table [
            y expr = \thisrow{W memory (kB)} / 1000,
            col sep = comma,
          ]
          {qbf_memory.csv}
          ;
          \addplot+ [
            Blue,
            only marks,
            mark=triangle*,
          ]
          table [
            y expr = \thisrow{WAlt memory (kB)} / 1000,
            col sep = comma,
          ]
          {qbf_memory.csv}
          ;
          \addplot+ [
            Indigo,
            only marks,
            mark=square*,
            nodes near coords style = {below},
          ]
          table [
            y expr = \thisrow{Explicit memory (kB)} / 1000,
            col sep = comma,
          ]
          {qbf_memory.csv}
          ;
        \end{axis}
      \end{tikzpicture}
      \caption{Memory.}
    \end{subfigure}

    \caption{Time and memory taken by \(\W\), \(\WAlt\), and the DFS
      algorithm on the randomly generated QBF instances, with four
      variables. All algorithms could finish in time over all the
      instances. Red circles are for \(\W\) and blue triangles for
      \(\WAlt\). Values are displayed next to the
    symbol.}\label{fig:app:experiments:solve:qbf:4}
  \end{figure}

  \begin{figure}[ht]
    \begin{subfigure}{\textwidth}
      \centering
      \begin{tikzpicture}[
          /pgfplots/table/x expr={(\thisrow{Variables}==5 &&
          \thisrow{Clauses}>=60)?\thisrow{Clauses}:nan},
        ]
        \begin{axis}[
            footnotesize,
            width = 1\textwidth,
            height = 120pt,
            axis x line = bottom,
            axis y line = left,
            xlabel = {Number of clauses},
            xlabel near ticks,
            ylabel = {Time (ms)},
            ylabel near ticks,
            ylabel shift = {-5pt},
            ymajorgrids = true,
            xtick = data,
            enlarge y limits = 0.2,
            enlarge x limits = 0.15,
            nodes near coords={\pgfmathprintnumber[1000
            sep={\,}]\pgfplotspointmeta},
          ]
          \addplot+ [
            Red,
            only marks,
            mark=*,
          ]
          table [
            y expr = \thisrow{W time total (ms)},
            col sep = comma,
          ]
          {qbf_solving.csv}
          ;
          \addplot+ [
            Blue,
            only marks,
            mark=triangle*,
            nodes near coords style = {below},
          ]
          table [
            y expr = \thisrow{WAlt time (ms)},
            col sep = comma,
          ]
          {qbf_solving.csv}
          ;
          \addplot+ [
            Indigo,
            only marks,
            mark=square*,
          ]
          table [
            y expr = \thisrow{Explicit time (ms)},
            col sep = comma,
          ]
          {qbf_solving.csv}
          ;
        \end{axis}
      \end{tikzpicture}
      \caption{Time. The values for \(\WAlt\) are given below the blue
      triangles, while values for \(\W\) are above the red circles.}
    \end{subfigure}

    \begin{subfigure}{\textwidth}
      \centering
      \begin{tikzpicture}[
          /pgfplots/table/x expr={(\thisrow{Variables}==5 &&
          \thisrow{Clauses}>=60)?\thisrow{Clauses}:nan},
        ]
        \begin{axis}[
            footnotesize,
            width = 1\textwidth,
            height = 120pt,
            axis x line = bottom,
            axis y line = left,
            xlabel = {Number of clauses},
            xlabel near ticks,
            ylabel = {Memory (MB)},
            ylabel near ticks,
            ylabel shift = {-5pt},
            ymajorgrids = true,
            xtick = data,
            enlarge y limits = 0.3,
            enlarge x limits = 0.1,
            nodes near coords={\pgfmathprintnumber[1000
            sep={\,}]\pgfplotspointmeta},
          ]
          \addplot+ [
            Red,
            only marks,
            mark=*,
          ]
          table [
            y expr = \thisrow{W memory (kB)} / 1000,
            col sep = comma,
          ]
          {qbf_memory.csv}
          ;
          \addplot+ [
            Blue,
            only marks,
            mark=triangle*,
            nodes near coords style = {below},
          ]
          table [
            y expr = \thisrow{WAlt memory (kB)} / 1000,
            col sep = comma,
          ]
          {qbf_memory.csv}
          ;
          \addplot+ [
            Indigo,
            only marks,
            mark=square*,
            nodes near coords style = {below},
          ]
          table [
            y expr = \thisrow{Explicit memory (kB)} / 1000,
            col sep = comma,
          ]
          {qbf_memory.csv}
          ;
        \end{axis}
      \end{tikzpicture}
      \caption{Memory.}
    \end{subfigure}

    \begin{subfigure}{\textwidth}
      \centering
      \begin{tikzpicture}[
          /pgfplots/table/x expr={(\thisrow{Variables}==5 &&
          \thisrow{Clauses}>=60)?\thisrow{Clauses}:nan},
        ]
        \begin{axis}[
            footnotesize,
            width = 1\textwidth,
            height = 120pt,
            axis x line = bottom,
            axis y line = left,
            xlabel = {Number of clauses},
            xlabel near ticks,
            ylabel = {Number of timeouts},
            ylabel near ticks,
            ylabel shift = {-5pt},
            ymajorgrids = true,
            xtick = data,
            enlarge y limits = 0.2,
            enlarge x limits = 0.1,
            nodes near coords,
          ]
          \addplot+ [
            Indigo,
            only marks,
            mark=square*,
          ]
          table [
            y expr = \thisrow{Explicit timeouts},
            col sep = comma,
          ]
          {qbf_solving.csv}
          ;
        \end{axis}
      \end{tikzpicture}
      \caption{Timeouts.}
    \end{subfigure}
    \caption{Time and memory taken by \(\W\), \(\WAlt\), and the DFS
      algorithm on the randomly generated QBF instances, with five
      variables. Both \(\W\) and \(\WAlt\) could finish in time over all the
      instances. The last figure gives the number of timeouts for the
      DFS algorithm. Red circles are for \(\W\) and blue triangles for
      \(\WAlt\). Values are displayed  next to the
    symbol.}\label{fig:app:experiments:solve:qbf:5}
  \end{figure}

  \begin{figure}[ht]
    \centering
    \begin{subfigure}{\textwidth}
      \centering
      \begin{tikzpicture}[
          /pgfplots/table/x expr={(\thisrow{Variables}==6 &&
          \thisrow{Clauses}>=60)?\thisrow{Clauses}:nan},
        ]
        \begin{axis}[
            footnotesize,
            width = 1\textwidth,
            height = 120pt,
            axis x line = bottom,
            axis y line = left,
            xlabel = {Number of clauses},
            xlabel near ticks,
            ylabel = {Time (ms)},
            ylabel near ticks,
            ylabel shift = {-5pt},
            ymajorgrids = true,
            xtick = data,
            enlarge y limits = 0.2,
            enlarge x limits = 0.15,
            nodes near coords={\pgfmathprintnumber[1000
            sep={\,}]\pgfplotspointmeta},
          ]
          \addplot+ [
            Red,
            only marks,
            mark=*,
          ]
          table [
            y expr = \thisrow{W time total (ms)},
            col sep = comma,
          ]
          {qbf_solving.csv}
          ;
          \addplot+ [
            Blue,
            only marks,
            mark=triangle*,
            nodes near coords style = {below},
          ]
          table [
            y expr = \thisrow{WAlt time (ms)},
            col sep = comma,
          ]
          {qbf_solving.csv}
          ;
          \addplot+ [
            Indigo,
            only marks,
            mark=square*,
          ]
          table [
            y expr = \thisrow{Explicit time (ms)},
            col sep = comma,
          ]
          {qbf_solving.csv}
          ;
        \end{axis}
      \end{tikzpicture}
      \caption{Time. The values for \(\WAlt\) are given below the blue
      triangles, while values for \(\W\) are above the red circles.}
    \end{subfigure}

    \begin{subfigure}{\textwidth}
      \centering
      \begin{tikzpicture}[
          /pgfplots/table/x expr={(\thisrow{Variables}==6 &&
          \thisrow{Clauses}>=60)?\thisrow{Clauses}:nan},
        ]
        \begin{axis}[
            footnotesize,
            width = 1\textwidth,
            height = 120pt,
            axis x line = bottom,
            axis y line = left,
            xlabel = {Number of clauses},
            xlabel near ticks,
            ylabel = {Memory (MB)},
            ylabel near ticks,
            ylabel shift = {-5pt},
            ymajorgrids = true,
            xtick = data,
            enlarge y limits = 0.3,
            enlarge x limits = 0.1,
            nodes near coords={\pgfmathprintnumber[1000
            sep={\,}]\pgfplotspointmeta},
          ]
          \addplot+ [
            Red,
            only marks,
            mark=*,
          ]
          table [
            y expr = \thisrow{W memory (kB)} / 1000,
            col sep = comma,
          ]
          {qbf_memory.csv}
          ;
          \addplot+ [
            Blue,
            only marks,
            mark=triangle*,
            nodes near coords style = {above},
          ]
          table [
            y expr = \thisrow{WAlt memory (kB)} / 1000,
            col sep = comma,
          ]
          {qbf_memory.csv}
          ;
          \addplot+ [
            Indigo,
            only marks,
            mark=square*,
            nodes near coords style = {below},
          ]
          table [
            y expr = \thisrow{Explicit memory (kB)} / 1000,
            col sep = comma,
          ]
          {qbf_memory.csv}
          ;
        \end{axis}
      \end{tikzpicture}
      \caption{Memory.}
    \end{subfigure}

    \begin{subfigure}{\textwidth}
      \centering
      \begin{tikzpicture}[
          /pgfplots/table/x expr={(\thisrow{Variables}==6 &&
          \thisrow{Clauses}>=60)?\thisrow{Clauses}:nan},
        ]
        \begin{axis}[
            footnotesize,
            width = 1\textwidth,
            height = 120pt,
            axis x line = bottom,
            axis y line = left,
            xlabel = {Number of clauses},
            xlabel near ticks,
            ylabel = {Number of timeouts},
            ylabel near ticks,
            ylabel shift = {-5pt},
            ymajorgrids = true,
            xtick = data,
            ymin = 0,
            ymax = 50,
            ytick = {0, 25, 50},
            enlarge y limits = 0.2,
            enlarge x limits = 0.1,
            nodes near coords,
          ]
          \addplot+ [
            Indigo,
            only marks,
            mark=square*,
          ]
          table [
            y expr = \thisrow{Explicit timeouts},
            col sep = comma,
          ]
          {qbf_solving.csv}
          ;
        \end{axis}
      \end{tikzpicture}
      \caption{Timeouts.}
    \end{subfigure}
    \caption{Time and memory taken by \(\W\), \(\WAlt\), and the DFS
      algorithm on the randomly generated QBF instances, with six
      variables. Both \(\W\) and \(\WAlt\) could finish in time over all the
      instances. The last figure gives the number of timeouts for the
      DFS algorithm. Red circles are for \(\W\) and blue triangles for
      \(\WAlt\). Values are displayed  next to the
    symbol.}\label{fig:app:experiments:solve:qbf:6}
  \end{figure}

  \begin{table}[ht]
    \caption{Time to construct the lattice for the QBF instances, with the
    QBF-tailored algorithm.}\label{app:tab:experiments:lattice:qbf}
    \centering
    \footnotesize
    \begin{tabular}{@{} l rr @{}}
      \toprule
      \multirow{2}{50pt}{Number of variables} & \multicolumn{2}{c}{Lattice}
      \\
      \cmidrule(l){2-3}
      & Size & Construction time
      \\
      \midrule
      4 & 512 & 0s 6ms \\
      5 & 2048 & 0s 48ms \\
      6 & 8192 & 0s 529ms \\
      7 & 32768 & 6s 882ms \\
      \bottomrule
    \end{tabular}
  \end{table}
  \end{document}